\newtheorem{theorem}{Theorem}
\newtheorem{observation}[theorem]{Observation}
\newtheorem*{remark}{No-go theorem}
\newcommand{\tr}{{\mathrm{tr}}}
\newcommand{\va}[1]{\ensuremath{(\Delta#1)^2}}
\newcommand{\ex}[1]{\ensuremath{\langle{#1}\rangle}}
\newcommand{\eins}{\mathbbm{1}}
\newcommand{\swap}{\mathbb{S}}
\renewcommand{\vr}{\ensuremath{\varrho}}
\renewcommand{\vec}[1]{\ensuremath{\boldsymbol{#1}}}
\begin{document}
\title{Reference-frame-independent quantum metrology}

\author{Satoya Imai}
\affiliation{Institute of Systems and Information Engineering, University of Tsukuba, Tsukuba, Ibaraki 305-8573, Japan}
\affiliation{Center for Artificial Intelligence Research (C-AIR), University of Tsukuba, Tsukuba, Ibaraki 305-8577, Japan}
\affiliation{QSTAR, INO-CNR, and LENS, Largo Enrico Fermi, 2, 50125 Firenze, Italy}
\affiliation{Naturwissenschaftlich-Technische Fakult\"at, Universit\"at Siegen, Walter-Flex-Stra{\ss}e~3, 57068 Siegen, Germany}

\author{Otfried G\"uhne}
\affiliation{Naturwissenschaftlich-Technische Fakult\"at, Universit\"at Siegen, Walter-Flex-Stra{\ss}e~3, 57068 Siegen, Germany}

\author{G\'eza T\'oth}
\affiliation{Department of Theoretical Physics, University of the Basque Country UPV/EHU, P.O. Box 644, E-48080 Bilbao, Spain}
\affiliation{EHU Quantum Center, University of the Basque Country UPV/EHU, Barrio Sarriena s/n, E-48940 Leioa, Biscay, Spain}
\affiliation{Donostia International Physics Center DIPC, Paseo Manuel de Lardizabal 4, San Sebasti\'an, E-20018, Spain}
\affiliation{IKERBASQUE, Basque Foundation for Science, E-48009 Bilbao, Spain}
\affiliation{HUN-REN Wigner Research Centre for Physics, P.O. Box 49, H-1525 Budapest, Hungary}

\date{\today}
\begin{abstract}
How can we perform a metrological task if only limited control over a quantum system is given? Here, we present systematic methods for conducting nonlinear quantum metrology in scenarios lacking a common reference frame. Our approach involves preparing multiple copies of quantum systems and then performing local measurements with randomized observables. First, we derive the metrological precision using an error propagation formula based solely on local unitary invariants, which are independent of the chosen basis. Next, we provide analytical expressions for the precision scaling in various examples of nonlinear metrology involving two-body interactions, like the one-axis twisting Hamiltonian. Finally, we analyze our results in the context of local decoherence and discuss its influences on the observed scaling.
\end{abstract}
\maketitle

\section{Introduction}
Quantum metrology involves three stages: (i) preparing an initial 
probe state, (ii) encoding a parameter $\theta$ through a state transformation, and  
(iii) measuring the transformed state to extract information about $\theta$. Each stage 
can be quantum or classical, including choices such as entangled or separable initial 
states, entangling or non-entangling state transformations, and measurements with 
multipartite operators or single-particle operators~\cite{giovannetti2004quantum, 
giovannetti2006quantum, paris2009quantum, giovannetti2011advances, toth2014quantum, 
pezze2018quantum, braun2018quantum}. The precision of parameter estimation, denoted 
as $\va{\theta}$, naturally depends on the chosen approach.

A central goal in {\it quantum} metrology is to overcome the precision reachable 
in the classical regime. In a fully classical setup, the best achievable precision 
is known as the shot-noise limit~\cite{giovannetti2006quantum, giovannetti2011advances, toth2014quantum, pezze2018quantum} $\va{\theta} \propto N^{-1}$, where $N$ is the 
number of particles in a probe system. On the other hand, the presence of 
initial entanglement in the preparation stage enables scaling beyond the 
shot-noise limit, ultimately reaching the quadratic limit: 
$\va{\theta} \propto N^{-2}$~\cite{giovannetti2006quantum, giovannetti2011advances, toth2014quantum, pezze2018quantum}. 
Furthermore, even without initial entanglement, entangling transformations can yield a scaling 
significantly better than the quadratic scaling, expressed as $\va{\theta} \propto N^{-2k+1}$ for 
integer values of $k$~\cite{luis2004nonlinear, beltran2005breaking, boixo2007generalized, 
boixo2008quantum, boixo2008quantumA, choi2008bose, boixo2009quantum, napolitano2011interaction, 
sewell2014ultrasensitive, beau2017nonlinear}, and an exponential scaling 
proposed by Roy and Braunstein~\cite{roy2008exponentially}. Note that the notion of the Heisenberg limit was originally introduced in Ref.~\cite{holland1993interferometric}, but later it was reformulated based on
a detailed accounting of relevant resources~\cite{zwierz2010general, 
zwierz2012ultimate}.

\begin{figure}[t!]
    \centering
    \includegraphics[width=0.99\columnwidth]{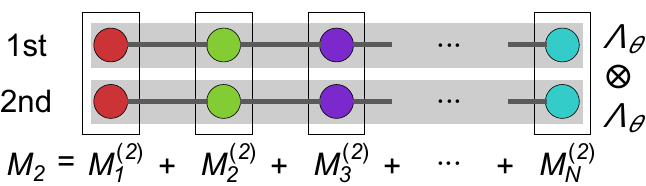}
    \caption{
    Sketch of the quantum metrology scheme presented in this paper for two copies 
    of an $N$-particle state with randomized measurements. The parameter $\theta$ is encoded
    in the 1st and 2nd copy via $\Lambda_\theta \otimes \Lambda_\theta$ 
    highlighted by gray color. 
    Subsequently, a randomized measurement $M_2 = \sum_{i=1}^N M_i^{(2)}$ is performed on the two-copy state, where 
    each local observable $M_i^{(2)}$ acts on the 1st and 2nd copy (highlighted 
    by a vertical box). This paper introduces systematic methods for evaluating 
    the precision, denoted as $\va{\theta}$, in a reference-frame-independent manner.
    }
    \label{sketch}
\end{figure}

For better accuracy, it is essential to have precise control of state preparation and parameter encoding, along with favorable measurements. In practice, however, uncontrollable experimental noise, such as magnetic field fluctuations for trapped ions or polarization rotations in optical fibers, can cause information loss in the encoding directions and disturb the calibration and alignment of the measurement settings. Such noise effects may result in the absence of a common Cartesian reference frame (associated with the $\text{SO}(3)$ or $\text{SU}(2)$ group). Also, one might consider the possibility of adversarial attacks by malicious parties that could covertly disrupt the target system, perhaps nullifying quantum advantages and preventing higher precision.

What options are available in such a non-ideal and black-box scenario where quantum control for the metrological tasks is limited? One approach is multiparameter metrology \cite{szczykulska2016multi, gessner2018sensitivity, proctor2018multiparameter, albarelli2019evaluating, carollo2019quantumness, albarelli2020perspective, liu2020quantum, demkowicz2020multi, liu2021distributed, lu2021incorporating,gorecki2022multiple} to estimate the desired
parameter as well as the error parameters in one go, but this may not work effectively if numerous unexpected variables change the system. Another approach is to establish a common reference frame, but this is known to be a resource-intensive process \cite{rudolph2003quantum, bartlett2007reference}.

\section{A no-go theorem}

More specifically, let us regard a reference frame as an abstract coordinate system that allows for transforming unspeakable information (such as spin rotation) into speakable information (such as classical bits represented as $0$ or $1$), for more details see Review~\cite{bartlett2007reference}. This is a general concept of how gyroscopes and clocks give a consistent way to measure orientation and time. Here we consider reference frames in quantum systems that can be converted to another system with a different reference frame through a transformation involving an element denoted as $U \in \text{SU}(d)$. For $d=2$, the Cartesian frame in $\text{SO}(3)$ is represented by the group $\text{SU}(2)$, corresponding to spin rotations.

In fact, given an $N$-particle state, the absence of a shared reference frame between the particles excludes a quantum advantage in standard schemes of metrology. More precisely, we have the following statement.

\begin{remark}
In the absence of a common reference frame (i.e., $N$ parties do not share a reference frame), quantum metrology using 
an encoding of a parameter with the local unitary $V_L = e^{-i\theta H_L}$ for $H_L = \sum_{i=1}^N H_i$ is impossible. This holds even 
if multiple copies of the $N$-particle state are used.
\end{remark}

This can be seen as follows: If each of the $N$ parties has no 
prior knowledge about the local reference frame (related to local unitary $U_{L} = U_1 \otimes \cdots \otimes U_N$) then $k$ copies of 
the state can be represented by
\begin{equation}
    \mathcal{G}_k (\vr) = \int dU_{L} \, (U_{L} \vr U_{L}^{\dagger})^{\otimes k},
\end{equation}
where the integral is taken over the Haar measure. Hence, $\mathcal{G}_k (\vr)$ is invariant under any local unitaries, that is, $[\mathcal{G}_k (\vr), V_{L}^{\otimes k}] = 0$ for any $V_L = e^{-i\theta H_L}$. Then one cannot encode any information about the parameter $\theta$ from $V_L = e^{-i\theta H_L}$, making linear metrology impossible.

Consequently, it is natural to ask whether \textit{nonlinear} 
metrology with non-local (entangling) unitary transformations 
is still possible even without a common reference frame. Here, we introduce metrological strategies in a reference-frame-independent manner. Our main idea is to prepare multiple copies of a quantum system and employ locally randomized measurement 
observables after a parameter encoding transformation, illustrated in Fig.~\ref{sketch}.
This strategy is inspired by previous works on randomized measurements used to characterize quantum 
correlations~\cite{elben2023randomized,cieslinski2024analysing}. Although related scenarios were discussed in 
\cite{banaszek2004experimental, vsafranek2015quantum, ahmadi2015communication, xie2017quantum, fanizza2021squeezing, gorecki2022quantum}, randomized measurements were, to our knowledge, 
not employed in those studies.

In this paper, we present systematic methods to analyze the precision in quantum metrology without a common reference frame. In particular, we provide analytical expressions for the precision with several copies of the state using the error-propagation formula. Our formulation relies on local unitary invariants, which can be used for nonlinear quantum metrology rather than linear one. Then we calculate the scaling behavior of the precision for various examples and discuss how decoherence effects can worsen metrological performance.

\section{Parameter estimation and error propagation}
Consider a quantum state of $N$ $d$-dimensional particles (qudits) denoted as $\vr$ defined in the Hilbert space $\mathcal{H}_d^{\otimes N}$. Suppose that the initial state $\vr$ can be transformed by a quantum transformation $\Lambda_\theta$ to encode a parameter $\theta$: $\vr \to \vr_\theta = \Lambda_\theta (\vr)$. In the following, we focus on a unitary parameter encoding $V_\theta = e^{-i\theta H}$ with a (usually nonlocal) Hamiltonian 
$H$ as the generator of the dynamics.

The parameter $\theta$ can be estimated from a measurement observable $M$. The estimation precision can be characterized by the error-propagation formula~\cite{hotta2004quantum,escher2012quantum,toth2014quantum,pezze2018quantum,braun2018quantum}
\begin{equation} \label{eq:errorpropagationformula}
    \va{\theta}
    =\frac{\va{M}}{|\partial_\theta \ex{M}|^2},
\end{equation}
where $\va{M}\equiv \ex{M^2}-\ex{M}^2$ and $\ex{M}$ is the expectation of $M$.
Note that the precision $\va{\theta}$ depends on $\theta$ and that the above formula allows to study the situation where $\theta$ is small.

\section{Main method}
Let us consider $k$ copies of the quantum state and perform a measurement on the system
\begin{equation}
    \ex{M_k} = \tr(\vr_\theta^{\otimes k} M_k),
\end{equation}
where $M_k$ acts on the $k$ copies of the state defined in $(\mathcal{H}_d^{\otimes N})^{\otimes k}$. Here we assume that the measurement observable $M_k$ is \textit{local} with respect to the parties,
\begin{equation}\label{eq:observable}
    M_k = \sum_{i=1}^N M_i^{(k)},
\end{equation}
where each $M_i^{(k)}$ acts (potentially nonlocally) on the 
$k$ copies of the $i$-th system, see also Fig.~\ref{sketch}.
As a choice of the observable $M_i^{(k)}$ 
we introduce a \textit{randomized} (twirled) measurement 
observable 
\begin{equation}
\Phi_k(\mathcal{O}) = 
    \int dU \, (U^\dagger \mathcal{O} U)^{\otimes k},
    \label{eq:ranmeasob}
\end{equation}
where $\mathcal{O}$ is a Hermitian operator defined in $\mathcal{H}_d$ and the integral is taken as the Haar random unitaries. By definition, this does not change under any unitary: $\Phi_k(V^\dagger \mathcal{O} V) = \Phi_k(\mathcal{O})$ for any unitary $V$.

For the sake of simplicity, we hereafter assume that 
$\mathcal{O}$ is traceless, $\tr(\mathcal{O}) = 0$, and normalized, $\tr(\mathcal{O}^2) = d$. We can then
formulate:

\begin{observation}\label{ob:twobody}
Consider the two copies of an $N$-qudit system with $k=2$ and 
$M_i^{(2)} = \Phi_2(\mathcal{O}_i)$. Then, the error-propagation formula leads to
\begin{equation} \label{eq:twobodyvar}
    \va{\theta}_2 
    =\frac{(d^2-1)N    -2S_1(\theta)    +2S_2(\theta)  -S_1^2(\theta) }{\left|\partial_\theta S_1(\theta)\right|^2},
\end{equation}
where $S_1(\theta) = \sum_{i=1}^N [d\tr(\vr_i^2)-1]$ for the single-particle reduced state $\vr_i = \tr_{\Bar{i}} (\vr_\theta)$ and $S_2(\theta) = \sum_{i<j} \left\{d^2\tr(\vr_{ij}^2)-1-[d\tr(\vr_i^2)-1]-[d\tr(\vr_j^2)-1] \right\}$ for the two-particle reduced state $\vr_{ij} = \tr_{\overline{ij}} (\vr_\theta)$, where all particles up to
$i,j$ are traced out.
\end{observation}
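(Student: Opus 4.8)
The plan is to compute the two ingredients of the error-propagation formula \eqref{eq:errorpropagationformula} for $M_2 = \sum_{i=1}^N \Phi_2(\mathcal{O}_i)$ directly, using the defining property that the twirl $\Phi_2$ projects onto the commutant of $U\otimes U$, i.e.\ onto $\mathrm{span}\{\eins\otimes\eins, \swap\}$. First I would establish the explicit two-copy twirl formula: for a traceless, normalized $\mathcal{O}$ on $\mathcal{H}_d$ one has $\Phi_2(\mathcal{O}) = \int dU\,(U^\dagger \mathcal{O}U)^{\otimes 2} = \alpha\,\eins\otimes\eins + \beta\,\swap$ with coefficients fixed by taking traces and by pairing with $\swap$: using $\tr(\mathcal{O})=0$, $\tr(\mathcal{O}^2)=d$ and the Schur-type identities $\int dU\,(U^\dagger \mathcal{O}U)\otimes(U^\dagger\mathcal{O}U)$ evaluated against $\eins$ and $\swap$, this gives $\Phi_2(\mathcal{O}) = \frac{1}{d^2-1}\left(\swap - \tfrac{1}{d}\eins\otimes\eins\right)$ (independent of which traceless normalized $\mathcal{O}$ we started from — this is the reference-frame-independence at the single-particle level).

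Next I would assemble $\ex{M_2} = \sum_i \tr\!\big(\vr_\theta^{\otimes 2}\, \Phi_2(\mathcal{O}_i)\big)$, where $\Phi_2(\mathcal{O}_i)$ acts on the two copies of particle $i$ and as identity on the remaining $2(N-1)$ subsystems. Tracing out everything except the two copies of particle $i$ replaces $\vr_\theta^{\otimes 2}$ by $\vr_i\otimes\vr_i$, and then $\tr\big((\vr_i\otimes\vr_i)\swap\big)=\tr(\vr_i^2)$ while $\tr(\vr_i\otimes\vr_i)=1$. This yields $\ex{M_2} = \frac{1}{d^2-1}\sum_i\big(\tr(\vr_i^2) - \tfrac1d\big) = \frac{1}{d(d^2-1)} S_1(\theta)$ up to the stated normalization, so that $\partial_\theta\ex{M_2} \propto \partial_\theta S_1(\theta)$, matching the denominator of \eqref{eq:twobodyvar}. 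For the numerator I need $\ex{M_2^2} = \sum_{i,j}\tr\big(\vr_\theta^{\otimes 2}\,\Phi_2(\mathcal{O}_i)\Phi_2(\mathcal{O}_j)\big)$, which splits into diagonal terms $i=j$ and off-diagonal terms $i\neq j$.

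The diagonal terms involve $\big(\swap-\tfrac1d\eins\big)^2 = \eins\otimes\eins - \tfrac2d\swap + \tfrac1{d^2}\eins\otimes\eins$ on the two copies of particle $i$, whose expectation in $\vr_i\otimes\vr_i$ produces a constant plus a term in $\tr(\vr_i^2)$ — after reduction these contribute the $(d^2-1)N$ and $-2S_1(\theta)$ pieces. The off-diagonal terms $i\neq j$ require the reduced two-particle state $\vr_{ij}$: here $\tr\big((\vr_{ij}\otimes\vr_{ij})(\swap_i\otimes\swap_j)\big) = \tr(\vr_{ij}^2)$, and expanding the product of the two twirled operators and collecting the $\swap_i\otimes\swap_j$, $\swap_i\otimes\eins$, $\eins\otimes\swap_j$, $\eins\otimes\eins$ contributions gives exactly the combination defining $S_2(\theta)$ together with the $-S_1^2(\theta)$ coming from subtracting $\ex{M_2}^2$. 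The main obstacle is the careful bookkeeping in this last step: one must correctly track which of the $2N$ tensor factors each $\swap$ and each reduced density operator lives on, make sure the cross terms between the $\eins$ and $\swap$ parts of $\Phi_2(\mathcal{O}_i)$ and $\Phi_2(\mathcal{O}_j)$ are all accounted for, and verify that the single-particle pieces hidden inside the $i\neq j$ sum recombine with the diagonal $-2S_1$ and with $-\ex{M_2}^2=-S_1^2/[d(d^2-1)]^2$-type terms to reproduce precisely the coefficients $-2S_1(\theta)$, $+2S_2(\theta)$, and $-S_1^2(\theta)$ after clearing the overall normalization. Once the combinatorics is organized by the number of particles on which a nontrivial $\swap$ acts (zero, one, or two), the identity \eqref{eq:twobodyvar} follows.
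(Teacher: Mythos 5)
Your proposal is correct and follows essentially the same route as the paper: expand the two-copy twirl in the span of $\eins_d^{\otimes 2}$ and $\swap$, evaluate $\ex{M_2}$ and $\ex{M_2^2}$ via the SWAP trick on the one- and two-particle reduced states, and collect terms. One small slip: the correctly normalized twirl is $\Phi_2(\mathcal{O})=\tfrac{1}{d^2-1}\bigl(d\,\swap-\eins_d^{\otimes 2}\bigr)$, i.e.\ $d$ times your expression, but this is harmless because the error-propagation ratio is invariant under rescaling $M_2\to cM_2$, so Eq.~(\ref{eq:twobodyvar}) still follows.
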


\begin{proof}
To derive Eq.~(\ref{eq:twobodyvar}), we substitute the observable in Eq.~(\ref{eq:observable}) with $M_i^{(2)} = \Phi_2(\mathcal{O}_i)$ into Eq.~(\ref{eq:errorpropagationformula}). For that, we first need to evaluate the Haar integral. In fact, assuming that
$\mathcal{O}_i$ is traceless and normalized, one has
\begin{equation} \label{eq:evaluation_kistwo}
    \Phi_2(\mathcal{O}_i)
    = \frac{1}{d^2-1}\left(d\swap_i-\eins_d^{\otimes 2}\right),
\end{equation}
where $\swap_i$ is the SWAP operator acting on both the first and second copies of the $i$-th system: $\swap \ket{x}\ket{y}=\ket{y}\ket{x}$ for details, see Refs.~\cite{werner1989quantum, emerson2005scalable, zhang2014matrix, roberts2017chaos}.  Using the 
property $\swap_i^2 = \eins_d$ and the SWAP trick $\tr\left[({X}\otimes {Y}) \swap \right] = \tr[{X} {Y}]$ for operators ${X}$ and ${Y}$, a straightforward calculation yields $\ex{M_2} = S_1(\theta)/(d^2-1)$. Similarly, we can arrive at Eq.~(\ref{eq:twobodyvar}).
\end{proof}

We have five remarks on Observation~\ref{ob:twobody}. First, the sketch of this metrological scheme is illustrated in Fig.~\ref{sketch}. Second, in general, the quantities $S_l$ for integer $l \in [1,N]$ 
are known as the $l$-body sector length~\cite{aschauer2004local, klockl2015characterizing, wyderka2020characterizing, eltschka2020maximum}, which can be associated with the purity of the $l$-particle reduced states, leading to the identity $\sum_{l=1}^N S_l(\vr) = d^N \tr(\vr^2) -1$. An important property is their invariance under local unitary transformations: $S_l(V_1 \otimes \cdots \otimes V_N \vr V_1^\dagger \otimes \cdots \otimes V_N^\dagger) = S_l(\vr)$ for any unitary $V_i$ for $i \in [1, N]$.

Third, the precision $\va{\theta}_2$ in Eq.~(\ref{eq:twobodyvar}) may remind us of the conventional spin-squeezing parameters~\cite{kitagawa1993squeezed, wineland1994squeezed, sorensen2001many, toth2007optimal, esteve2008squeezing, ma2011quantum} in the sense that the denominator relies on the reduced single-particle states and the numerator depends on both the reduced single-particle and two-particle states. In contrast to spin-squeezing parameters, the precision $\va{\theta}_2$ does not change under any parameter encoding by local unitary $V_L = e^{-i\theta H_L}$ for a local Hamiltonian~$H_L = \sum_{i=1}^N H_i$, which agrees with the no-go theorem discussed above. On the other hand, it can be changed under some unitary $V_G = e^{-i\theta H_G}$ for a non-local interaction Hamiltonian~$H_G$.

Finally, one might generalize our approach to further multicopy scenarios or other observables, perhaps resulting in additional local unitary invariants with higher degrees. However, it would be demanding to find the simple expression by the analytical evaluation of Haar integrals, and it may not necessarily lead to higher precision.

In the following, we consider four copies to derive the precision 
achievable with higher-order quantities. For the sake of simplicity, 
we focus on qubits ($d=2$) where we can, without loss of generality, take $\mathcal{O}=\sigma_z$. Then we have:

\begin{observation}\label{ob:fourcopy}
Consider the four copies of an $N$-qubit system, that is, $k=4$, $d=2$, and $M_i^{(4)} = \Phi_4(\sigma_z^{(i)})$. Then, the error-propagation formula leads to 
\begin{equation} 
    \va{\theta}_4
    \!=\! \frac{15N \!-\!20 S_1(\theta)\!+\!8 F_1(\theta)\!+\!2F_2(\theta)\!-\! 3F_1^2(\theta)}
    {3|\partial_\theta F_1(\theta)|^2},
    \label{eq:variancefour}
\end{equation}
where $F_1(\theta) = \sum_{i=1}^N [2\tr(\vr_i^2)-1]^2$ for $\vr_i = \tr_{\Bar{i}} (\vr_\theta)$ and $F_2(\theta) = \sum_{i<j}
\left\{
[\tr(T_{ij} T_{ij}^\top)]^2
+2\tr(T_{ij} T_{ij}^\top T_{ij}T_{ij}^\top)
\right\}$
with the matrix $T_{ij}$ with the elements $[T_{ij}]_{\mu \nu} = \tr(\vr_{ij} \sigma_\mu \otimes \sigma_\nu)$ for $\vr_{ij} = \tr_{\overline{ij}} (\vr_\theta)$ and $\mu, \nu = x,y,z$.
\end{observation}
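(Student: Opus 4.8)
The plan is to proceed exactly as in the proof of Observation~\ref{ob:twobody}, but now with $k=4$: substitute $M_i^{(4)} = \Phi_4(\sigma_z^{(i)})$ into the error-propagation formula~(\ref{eq:errorpropagationformula}), which requires computing (i) the explicit form of the fourth-moment twirl $\Phi_4(\sigma_z)$, (ii) the expectation $\ex{M_4} = \sum_i \tr(\vr_\theta^{\otimes 4}\,\Phi_4(\sigma_z^{(i)}))$, and (iii) the variance $\ex{M_4^2}-\ex{M_4}^2$. The first step is the technical heart: by Schur--Weyl duality the Haar average of $(U^\dagger\sigma_z U)^{\otimes 4}$ lies in the algebra spanned by the representation of $S_4$ on $(\mathbb{C}^2)^{\otimes 4}$, i.e.\ a linear combination of the $24$ permutation operators $P_\pi$. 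For $d=2$ these are not independent, so I would reduce to a convenient basis (for instance the identity, the three two-cycles with the remaining two indices swapped or not, double transpositions, three-cycles, four-cycles) and fix the coefficients by contracting with a small set of test states or by using the known formulas for Weingarten functions $\mathrm{Wg}(\pi,d)$ at $d=2$. The tracelessness $\tr(\sigma_z)=0$ kills all contributions in which any index of the $\sigma_z$'s is a fixed point of $\pi$ not paired with another $\sigma_z$ index, which drastically prunes the sum; normalization $\tr(\sigma_z^2)=2$ fixes the overall scale.

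Once $\Phi_4(\sigma_z)$ is written as an explicit combination of permutation operators, steps (ii) and (iii) are the same ``SWAP-trick'' bookkeeping as before: $\tr(\vr_\theta^{\otimes 4} P_\pi) = \prod_{c\in\mathrm{cycles}(\pi)} \tr(\vr_\theta^{|c|})$ for a single party, and for $M_i^{(4)} M_j^{(4)}$ with $i\neq j$ the trace factorizes over the two parties, each giving a cycle-type contraction of $\vr_{ij}$ (or its marginals). For $i=j$ one uses $\Phi_4(\sigma_z)^2$ re-expanded in permutation operators together with the single-party cycle contractions. The expectation $\ex{M_4}$ should collapse to something proportional to $\sum_i \tr(\vr_i^2)$, but because of the fourth power of $\sigma_z$-strings the natural invariant that appears in the diagonal variance terms is $[2\tr(\vr_i^2)-1]^2$, which is exactly $F_1(\theta)$; similarly the off-diagonal terms produce the Bloch-tensor traces $[\tr(T_{ij}T_{ij}^\top)]^2$ and $\tr((T_{ij}T_{ij}^\top)^2)$ assembled into $F_2(\theta)$, plus lower-order leftovers that recombine into the $15N - 20S_1$ piece. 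The derivative in the denominator only sees $\partial_\theta\ex{M_4}$, hence $\partial_\theta F_1$, and $\partial_\theta S_1 = \partial_\theta F_1 \cdot(\text{const})$ need not appear separately because $S_1$ enters only through the state-dependent constant term in $\ex{M_4}$'s square.

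I would organize the write-up as: (1) state $\Phi_4(\sigma_z) = \tfrac{1}{?}\sum_\pi c_\pi P_\pi$ with the coefficients, citing the same references \cite{werner1989quantum, emerson2005scalable, zhang2014matrix, roberts2017chaos} used earlier and Weingarten-calculus background; (2) compute $\ex{M_4}$ using the SWAP trick, obtaining a closed form in $\tr(\vr_i^2)$; (3) compute $\ex{M_4^2}$, splitting into $i=j$ and $i<j$ contributions and identifying the sector-length/Bloch-tensor invariants; (4) assemble~(\ref{eq:variancefour}) and note, as a remark parallel to the third remark after Observation~\ref{ob:twobody}, the local-unitary invariance of $F_1, F_2$ (they are built from marginals of $\vr_\theta$ and from $T_{ij}T_{ij}^\top$, which is covariant so its spectral invariants are LU-invariant) and hence the consistency with the no-go theorem. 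The main obstacle is step~(1): at $d=2$ the $24$ permutation operators satisfy nontrivial linear relations (the image of $\mathbb{C}[S_4]$ in $\mathrm{End}((\mathbb{C}^2)^{\otimes 4})$ has dimension $\binom{2\cdot2}{2}\cdot\frac{1}{?}$-type reduction, concretely dimension $10$ rather than $24$), so the Weingarten sum must be carefully collapsed onto an independent basis and then the traceless/normalized constraints applied; getting the numerical coefficients $15$, $20$, $8$, $2$, $3$ right is exactly where an arithmetic slip would hide, so I would cross-check the final formula in two limits --- a product state $\vr_\theta = (\ket{0}\!\bra{0})^{\otimes N}$, where $F_1 = N$, $F_2=0$, and the variance must reduce to something $\theta$-independent and finite --- and a maximally entangled marginal case where $\tr(\vr_i^2)=1/2$ so $F_1 = 0$, isolating the $15N + 2F_2$ structure.
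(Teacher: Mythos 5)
Your overall architecture (twirl $\to$ expectation $\to$ variance split into $i=j$ and $i\neq j$ contributions) matches the paper's, but your route to the Haar integral differs: the paper never expands $\Phi_4(\sigma_z)$ in permutation operators. It writes $U^\dagger\sigma_z U=\sum_a u_a\sigma_a$ for a uniformly random Bloch direction and uses the fourth-moment formula $\mathcal{I}(a,b,c,d)=\tfrac{16}{15}(\delta_{a,b}\delta_{c,d}+\delta_{a,c}\delta_{b,d}+\delta_{a,d}\delta_{b,c})$, contracting directly with the Bloch vectors $r_a^{(i)}$ and correlation tensors $t_{ab}^{(ij)}$. For $d=2$ this sidesteps exactly the linear dependences among the $24$ permutation operators that you identify as the delicate point of the Weingarten route; your approach is viable in principle but strictly harder here, and you leave its hardest step (fixing the collapsed coefficients) unexecuted.

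The genuine gap is your treatment of $\ex{M_4}$. You assert it "should collapse to something proportional to $\sum_i\tr(\vr_i^2)$" and then reconcile the denominator of Eq.~(\ref{eq:variancefour}) by claiming $\partial_\theta S_1=\partial_\theta F_1\cdot(\text{const})$. Both statements are wrong: the fourth-moment contraction gives $\ex{M_4}=\tfrac{1}{5}\sum_i r_i^4=\tfrac{1}{5}F_1(\theta)$ with $r_i^2=2\tr(\vr_i^2)-1$, a genuinely quartic function of the marginals, and $\partial_\theta S_1$ is not a constant multiple of $\partial_\theta F_1$ for a general state (compare $\sum_i r_i^2$ with $\sum_i r_i^4$ for inhomogeneous $r_i$). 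Obtaining $\ex{M_4}=F_1/5$ is precisely why the denominator is $3|\partial_\theta F_1|^2$; with your claimed form the error-propagation formula would instead produce $|\partial_\theta S_1|^2$, i.e.\ a different statement. Your proposed consistency check would not catch this and is itself flawed: for the pure product state $\ket{0}^{\otimes N}$ one has $T_{ij}=\vec{r}_i\vec{r}_j^{\top}$, hence $\tr(T_{ij}T_{ij}^\top)=\tr[(T_{ij}T_{ij}^\top)^2]=1$ and $F_2=3N(N-1)/2$ rather than $F_2=0$; with the correct value the numerator of Eq.~(\ref{eq:variancefour}) vanishes identically, whereas your value would yield the negative "variance" $3N-3N^2$.
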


The proof of Observation~\ref{ob:fourcopy} is given in Appendix~A. We remark that $F_1$ and $F_2$ are also invariants under local unitaries, where $F_2$ for $N=2$ is known as one of the Makhlin invariants \cite{makhlin2002nonlocal}. Note also that the precision $\va{\theta}_4$ depends on the one- and two-body correlations of $\varrho$ only, which can be traced back to the fact that $M_4$ is a one-body observable, and $\va{\theta}_4$ contains the variance of it.

Here it should be essential to notice that $\va{\theta}_2$ 
in Eq.~(\ref{eq:twobodyvar}) results from the two copies of 
$N$ particles, while $\va{\theta}_4$ in Eq.~(\ref{eq:variancefour}) results from the four copies of $N$ particles. 
To compare both correctly, let us introduce the gain relative to the shot-noise limit
\begin{equation}\label{eq:gainkcopy}
    G_k=\frac{1}{kN\va{\theta}_k}.
\end{equation}
Note that $G_k>1$ implies higher precision beyond the shot-noise limit.

\section{Nonlinear Hamiltonian dynamics}
Here we show several scalings in the proximity of $\theta=0$ based on our results. For that, we consider the estimation precision in the limit of $\theta \to 0$. We present the following result:
\begin{observation}\label{ob:jx2twobodyhamscl}
    Consider that $\ket{\psi_\theta} = e^{-i\theta H}\ket{1}^{\otimes N}$ and $H=J_x^2$, where $J_x = \frac{1}{2} \sum_{i=1}^N \sigma_x^{(i)}$. Then, the gain in Eq.~(\ref{eq:gainkcopy}) is obtained as
\begin{subequations} 
\begin{align}
    \label{eq:G2nonlinear}
    \lim_{\theta\to 0}
    G_2
    &=\frac{N-1}{4},\\
    \label{eq:G4nonlinear}
    \lim_{\theta\to 0}
    G_4
    &=\frac{3(N-1)}{8},
\end{align}
\end{subequations}
for $k=2$ and $k=4,$ respectively.
\end{observation}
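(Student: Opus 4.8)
\emph{Proof idea.---} The plan is to recast all the invariants through one- and two-body Pauli correlators, notice that the error-propagation formula is of the indeterminate form $0/0$ at $\theta=0$, and resolve the limit with a second-order Taylor expansion.

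First, since $\ket{\psi_\theta}=e^{-i\theta J_x^2}\ket{1}^{\otimes N}$ is permutation invariant, all reduced states $\vr_i(\theta)$ coincide and all $\vr_{ij}(\theta)$ coincide, so $S_1,S_2,F_1,F_2$ collapse to multiples of a few scalars. It is convenient to write them through Pauli expectations, $S_1=\sum_{i,\mu}\ex{\sigma_\mu^{(i)}}^2$, $S_2=\sum_{i<j}\sum_{\mu\nu}\ex{\sigma_\mu^{(i)}\sigma_\nu^{(j)}}^2$ with $\mu,\nu\in\{x,y,z\}$, $F_1=\sum_i(\sum_\mu\ex{\sigma_\mu^{(i)}}^2)^2$, and $F_2$ through the correlation matrix $T_{ij}$. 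At $\theta=0$ the probe is a product coherent spin state along $z$; all reduced states are then pure and one checks $S_l(0)=\binom{N}{l}$, $F_1(0)=N$, $F_2(0)=3\binom{N}{2}$. Inserting these into Eqs.~(\ref{eq:twobodyvar}) and (\ref{eq:variancefour}) makes both numerators vanish, while $\partial_\theta S_1$ and $\partial_\theta F_1$ also vanish at $\theta=0$ because the purities attain their maxima there. Hence the limits in Eqs.~(\ref{eq:G2nonlinear})--(\ref{eq:G4nonlinear}) are genuine $0/0$ limits that require a Taylor expansion.

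Next I would use a parity argument: in the $\sigma_z$ eigenbasis both $J_x^2$ and $\ket{1}^{\otimes N}$ are real, so $\vr_{-\theta}=\overline{\vr_\theta}$ and every purity and every orthogonal invariant of the correlation tensor is even in $\theta$. Thus $S_1=N-b_1\theta^2$, $S_2=\binom{N}{2}+d_2\theta^2$, $F_1=N-2b_1\theta^2$, $F_2=3\binom{N}{2}+c_2\theta^2$ up to $O(\theta^4)$, and the $\theta^2$ in each numerator cancels the $\theta^2$ in $|\partial_\theta S_1|^2$, respectively $|\partial_\theta F_1|^2$. The coefficients come from a Heisenberg-picture expansion $\ex{O(\theta)}=\bra{\psi_0}e^{i\theta J_x^2}Oe^{-i\theta J_x^2}\ket{\psi_0}$: using $[J_x,\sigma_y^{(i)}]=i\sigma_z^{(i)}$, $[J_x,\sigma_z^{(i)}]=-i\sigma_y^{(i)}$, that $J_x\ket{\psi_0}$ is proportional to the uniform superposition of all single-spin-flip states, and that $\ket{\psi_0}$ is a $\sigma_z$ product state, every nested commutator reduces to a finite sum that evaluates to a low-degree polynomial in $N$. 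The pivotal inputs are $\bra{\psi_0}[J_x^2,[J_x^2,\sigma_z^{(i)}]]\ket{\psi_0}=1-N$, giving $\ex{\sigma_z^{(i)}}=-1+\tfrac{N-1}{2}\theta^2+O(\theta^4)$ and hence $b_1=N(N-1)$; on the two-body side $\ex{\sigma_z^{(i)}\sigma_z^{(j)}}=1-(N-2)\theta^2+O(\theta^4)$ together with the only correlators starting at first order, $\ex{\sigma_x^{(i)}\sigma_y^{(j)}}=\ex{\sigma_y^{(i)}\sigma_x^{(j)}}=\theta+O(\theta^2)$, all other two-body Pauli correlators being $O(\theta^2)$ and hence irrelevant once squared. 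These yield $d_2=N(N-1)(3-N)$ and, after expanding $[\tr(T_{ij}T_{ij}^\top)]^2$ and $\tr(T_{ij}T_{ij}^\top T_{ij}T_{ij}^\top)$ to order $\theta^2$, $c_2=2N(N-1)(7-3N)$.

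Finally, assembling the pieces: the numerator of Eq.~(\ref{eq:twobodyvar}) equals $2[b_1(N+1)+d_2]\theta^2=8N(N-1)\theta^2$ while $|\partial_\theta S_1|^2=4b_1^2\theta^2$, so $\va{\theta}_2\to\frac{2}{N(N-1)}$ and $G_2\to\frac{N-1}{4}$; likewise the numerator of Eq.~(\ref{eq:variancefour}) equals $[4b_1(1+3N)+2c_2]\theta^2=32N(N-1)\theta^2$ while $3|\partial_\theta F_1|^2=48b_1^2\theta^2$, so $\va{\theta}_4\to\frac{2}{3N(N-1)}$ and $G_4\to\frac{3(N-1)}{8}$. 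The main obstacle is the bookkeeping of the second-order expansion --- in particular keeping track of which Pauli correlators appear at first rather than second order, and assembling the $F_2$ invariant from $T_{ij}$ without error; a leaner route, where available, is to insert the known closed-form one- and two-qubit reduced density matrices of the one-axis-twisted state into the already simplified $S_l,F_l$ and then take $\theta\to0$.
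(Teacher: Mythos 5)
Your proposal is correct: every intermediate value checks out against the paper's exact expressions ($S_1=N-N(N-1)\theta^2$, $S_2=\tbinom{N}{2}+N(N-1)(3-N)\theta^2$, $F_1=N-2N(N-1)\theta^2$, $F_2=3\tbinom{N}{2}+2N(N-1)(7-3N)\theta^2$, and the final $\va{\theta}_2\to 2/[N(N-1)]$, $\va{\theta}_4\to 2/[3N(N-1)]$). The overall skeleton --- permutation invariance reducing everything to one two-qubit reduced state, substitution into Eqs.~(\ref{eq:twobodyvar}) and (\ref{eq:variancefour}), and a $0/0$ limit at $\theta=0$ --- is the same as the paper's. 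Where you differ is in how the correlators are obtained: the paper quotes the exact closed-form Bloch vector and correlation matrix of the one-axis-twisted state from Wang and M{\o}lmer ($r_z=-\cos^{N-1}\theta$, $t_{zz}=\frac{1}{2}[\cos^{N-2}(2\theta)+1]$, $t_{xy}=\sin\theta\cos^{N-2}\theta$, etc.), inserts them into $S_l$ and $F_l$ for all $\theta$, and only then takes the limit, whereas you compute just the order-$\theta^2$ coefficients from nested commutators of $J_x^2$ with single- and two-body Paulis, supported by a parity argument that kills the odd orders of the invariants. Your route is self-contained and makes transparent which correlators matter at leading order (only $t_{xy}=t_{yx}=\theta+O(\theta^2)$ enter at first order, everything else at second); the paper's route is shorter given the external reference and additionally yields $G_k$ at finite $\theta$, which the paper needs later for the decoherence plots. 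The closing remark in your proposal --- that the leaner route is to insert the known closed-form reduced density matrices --- is precisely what the paper does.
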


\begin{proof}
To prove this Observation, we need to compute all the terms $S_1(\theta), S_2(\theta), F_1(\theta)$, and $F_2(\theta)$. Since the state $\ket{\psi_\theta}$ is symmetric under exchange for any two qubits, it is sufficient to focus on one of the reduced two-qubit states and multiply its results by some factors like $N$ or $N(N-1)/2$ in the end. With the help of the result in Ref.~\cite{wang2002pairwise}, we can immediately find their explicit expressions, for details see Appendix~B. 
\end{proof}

\begin{figure}[t]
    \centering
    \includegraphics[width=0.8\columnwidth]{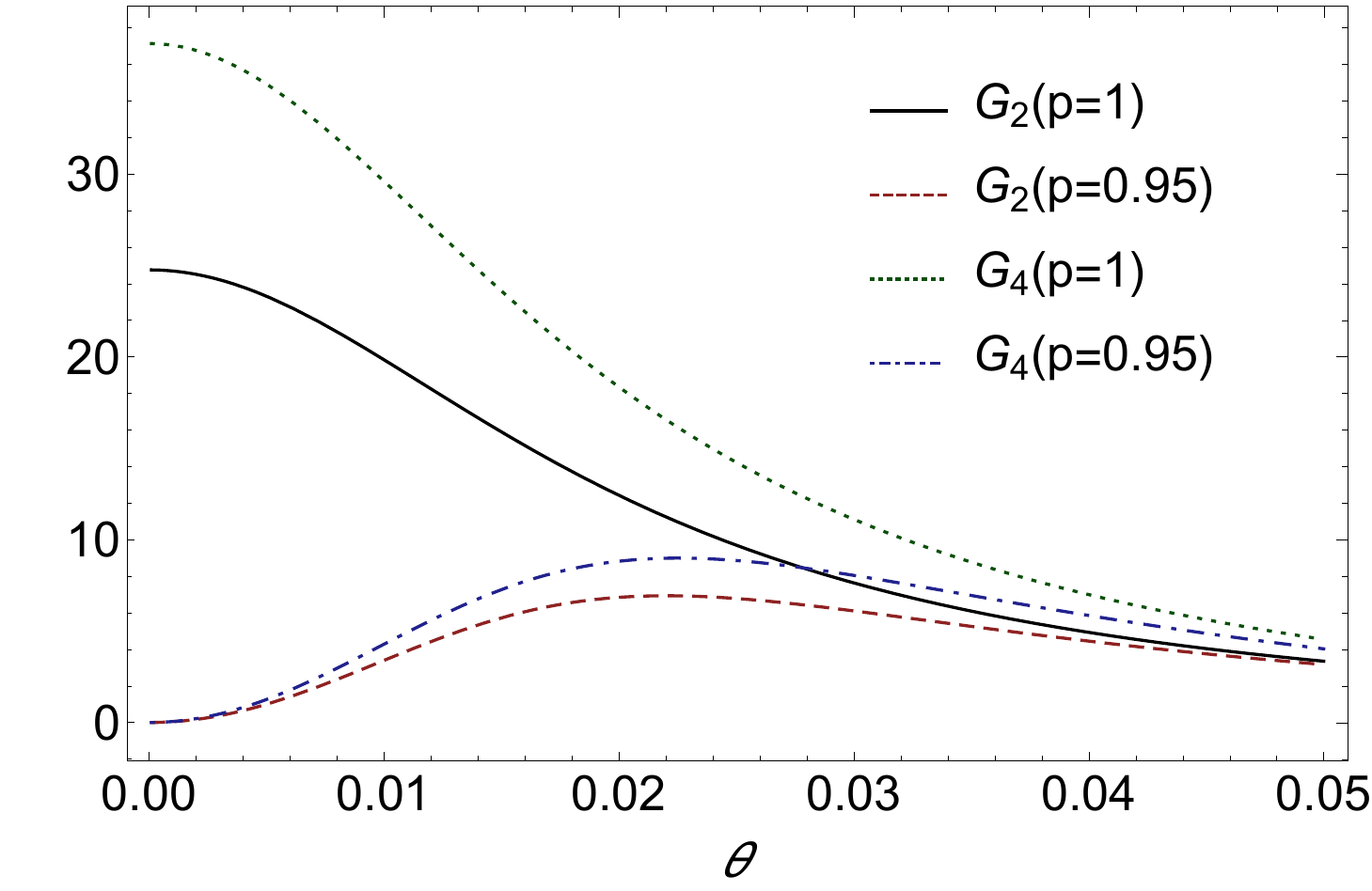}
    \caption{Sensitivity of the metrological gain defined in Eq.~(\ref{eq:gainkcopy}) to parameter shifts based on Observation~\ref{ob:jx2twobodyhamscl} in $N=100$, where $p$ denotes the noise parameter in the local depolarizing channel in Eq.~(\ref{eq:locladepola}).
    }
    \label{Fig2}
\end{figure}
\begin{figure}[t]
    \centering
    \includegraphics[width=0.8\columnwidth]{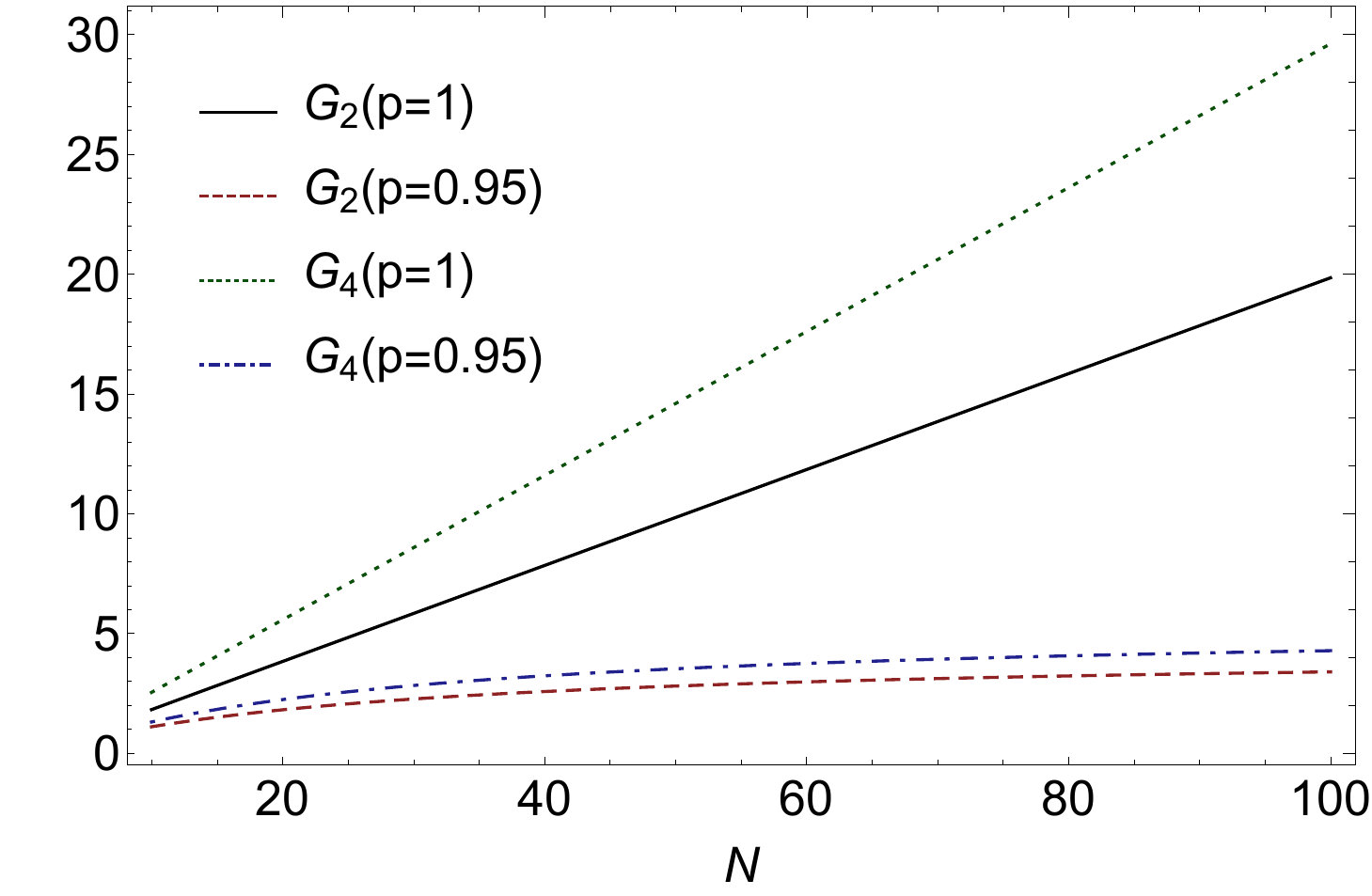}
    \caption{Growth in the metrological gain defined in Eq.~(\ref{eq:gainkcopy}) for an increasing number of particles based on Observation~\ref{ob:jx2twobodyhamscl} with a fixed $\theta=1/N$, where $p$ denotes the noise parameter in the local depolarizing channel in Eq.~(\ref{eq:locladepola}).
    }
    \label{Fig3}
\end{figure}

We remark that the nonlinear dynamics with $H=J_x^2$ is called the one-axis twisting Hamiltonian. This nonlinear dynamics is known to produce spin-squeezing entanglement \cite{kitagawa1993squeezed, ma2011quantum, pezze2018quantum} and also many-body Bell correlations \cite{plodzien2022one}. In Appendix~D, we will consider another Hamiltonian dynamics and show that the gain $G_k$ in Eq.(\ref{eq:gainkcopy}) for $k=2,4$ scales exponentially, similarly to the example of Roy and Braunstein in Ref.~\cite{roy2008exponentially}. In general, quantum entanglement generated by the unitary dynamics is necessary to achieve high values of $G_k$.

Note that our scaling $G_k \propto N$ in Eqs.~(\ref{eq:G2nonlinear}, \ref{eq:G4nonlinear}), that is, $\va{\theta} \propto 1/N^2$, was found in Ref.~\cite{rey2008many} for product states with separable measurements in the one-axis twisting Hamiltonian (moreover, the better precision $\va{\theta} \propto 1/N^3$ was discussed \cite{boixo2008quantumA, choi2008bose, boixo2009quantum}). In these
works, however, a shared reference frame between the particles was assumed, and our results show that is not needed.

\section{Decoherence}
Here we analyze how decoherence can influence the metrological gain. As a typical decoherence model, we consider the so-called depolarizing noise channel for a quantum state $\sigma \in \mathcal{H}_d$, see Ref.~\cite{nielsen2002quantum}
\begin{equation} \label{eq:locladepola}
    \mathcal{E}_p(\sigma)
    = p \sigma + \frac{1-p}{d}\eins_d,
\end{equation}
where $0\leq p\leq 1$. In the following, let us discuss a scenario that a pure initial state is transformed by nonlinear unitary dynamics, and then each particle is \textit{locally} affected by the depolarizing channel with the same local error parameter
$\Lambda_\theta(\vr)
=\mathcal{E}_p^{\otimes N}(V_\theta \vr V_\theta^\dagger)$.
In this scenario, the precision can be obtained from the formulas given for the noiseless case in Eqs.~(\ref{eq:twobodyvar},\ref{eq:variancefour}) by replacing $S_l(\theta)$ and $F_l(\theta)$ for $l=1,2$ by
\begin{subequations}
\begin{align}
    S_l(\theta) &\to p^{2l} S_l(\theta),\\
    F_l(\theta) &\to p^{4l} F_l(\theta).
\end{align}
\end{subequations}

In Figs.~\ref{Fig2} and \ref{Fig3}, we plot the gains $G_k(p)$ for a noise parameter $p$ of Observation~\ref{ob:jx2twobodyhamscl} and compare the noiseless and a noisy case with $p=0.95$. In Fig.~\ref{Fig2}, we can find that the maximal gain in the noisy case cannot be achieved by the limit of $\theta \to 0$ unlike the noiseless case, and the optimal value of $\theta$ can be shifted depending on $p$ and $N$~\cite{lucke2011twin, urizar2013macroscopic}. In Fig.~\ref{Fig3}, we illustrate the growth in both gains for increasing particles for fixed $\theta=1/N$. We remark that the metrological gain decreases due to noise effects, where the precision in Eqs.~(\ref{eq:G2nonlinear}) and (\ref{eq:G4nonlinear}) is diminished.

\section{Collective randomization}
So far we have discussed the reference-frame-independent metrological scheme using locally-independent randomized observables based on Eq.~(\ref{eq:ranmeasob}). As shown in Fig.~\ref{sketch}, we have considered individual randomizations for {each local measurement on the two-copy space, as $M_2 = \sum_{i=1}^N \Phi_2(\sigma_z^{(i)})$ in Eq.~(\ref{eq:observable}).}  Here we will introduce a \textit{collective} randomization scheme that uses multilateral simultaneous rotations on all subsystems, motivated by the recent work in Ref.~\cite{imai2024collective}.

More precisely, instead of $M_2$, we consider a collective randomized observable on the two-copy system
\begin{equation}\label{eq:collectiverandomized}
    \mathcal{X}_2 = \int dU \,
    \left({U^\dagger}^{\otimes N} J_z U^{\otimes N}\right)^{\otimes 2},
\end{equation}
where $J_z = \frac{1}{2} \sum_{i=1}^N \sigma_z^{(i)}$ is known as the collective angular momentum acting on the $N$-qubit system. Here, $U$ is a local unitary on a single particle and the difference to the scheme before is that the same $U$ is applied to all parties. Then we can formulate the final result:

\begin{observation}\label{ob:collective}
Consider the two copies of an $N$-qubit system with the 
collective randomized observable $\mathcal{X}_2$. 
The error-propagation formula leads then to
\begin{equation} 
    \va{\theta}_{C_2}
    =\frac{f(N) + B(\theta)-[S_1(\theta) + K_1(\theta)]^2}
    {|\partial_\theta [S_1(\theta) + K_1(\theta)]|^2},
\end{equation}
where $f(N) = 3N(-2N+3)$ and 
\begin{align} \nonumber
    B(\theta)
    &= 2 [-S_1(\theta) -2K_1(\theta) + S_2(\theta) + K_2(\theta)]\\
    &\quad
    + K_2^\prime(\theta)
    + 16(N-1)\sum_{\mu=1}^3 \ex{J_\mu^2}.
\end{align}
Here we define that
$K_1(\theta)= \sum_{i\neq j}^N \sum_{\mu =1}^3 \tr[(\vr_{i} \otimes \vr_{j})(\sigma_\mu \otimes \sigma_\mu)]$ for $\vr_i = \tr_{\Bar{i}} (\vr_\theta)$, $K_2(\theta) =\sum_{i\neq j \neq k} \tr(T_{ij}T_{ik}^\top)$, and $K_2^\prime(\theta)=\sum_{i\neq j \neq k \neq l} \tr(T_{ik}T_{jl}^\top)$ with the matrix $T_{ij}$ with the elements $[T_{ij}]_{\mu \nu} = \tr(\vr_{ij} \sigma_\mu \otimes \sigma_\nu)$ for $\vr_{ij} = \tr_{\overline{ij}} (\vr_\theta)$.
\end{observation}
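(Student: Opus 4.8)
The plan is to mimic the derivation of Observation~\ref{ob:twobody}, but now carrying out the Haar average \emph{collectively} over all $N$ parties simultaneously, as prescribed by $\mathcal{X}_2$ in Eq.~(\ref{eq:collectiverandomized}). The starting point is the error-propagation formula~(\ref{eq:errorpropagationformula}) applied to $M=\mathcal{X}_2$ on the two-copy state $\vr_\theta^{\otimes 2}$, so I need $\ex{\mathcal{X}_2}=\tr(\vr_\theta^{\otimes 2}\,\mathcal{X}_2)$ and $\ex{\mathcal{X}_2^2}=\tr(\vr_\theta^{\otimes 2}\,\mathcal{X}_2^2)$. Writing $J_z=\tfrac12\sum_i\sigma_z^{(i)}$ and expanding, $\mathcal{X}_2=\tfrac14\sum_{i,j}\int dU\,(U^\dagger\sigma_z U)^{(i)}\otimes(U^\dagger\sigma_z U)^{(j)}$ where the two tensor factors sit on the two copies; the diagonal terms $i=j$ and off-diagonal $i\neq j$ must be treated separately because the single-copy Haar average of $U^\dagger\sigma_z U\otimes U^\dagger\sigma_z U$ on two \emph{distinct} parties is not the same object as the SWAP-type operator appearing for a single party. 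The key representation-theoretic input is the standard formula $\int dU\,(U^\dagger A U)\otimes(U^\dagger B U)=\alpha\,\eins\otimes\eins+\beta\,\swap$ with $\alpha,\beta$ fixed by tracing against $\eins$ and $\swap$; applied here with $A=B=\sigma_z$ (traceless, $\tr\sigma_z^2=2$) this reproduces the $\Phi_2(\sigma_z)=\tfrac13(2\swap-\eins^{\otimes2})$ building block, and the collective average just distributes this over every pair of particle-slots across the two copies.

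Next I would organize the computation by the number of distinct particle indices appearing. For $\ex{\mathcal{X}_2}$ one gets contributions indexed by one particle ($i=j$, giving the single-particle purities, i.e.\ $S_1$) and by two particles ($i\neq j$, giving the $K_1$ term built from $\sum_\mu\tr[(\vr_i\otimes\vr_j)(\sigma_\mu\otimes\sigma_\mu)]$); this is exactly why the denominator of $\va{\theta}_{C_2}$ is $|\partial_\theta[S_1+K_1]|^2$. For $\ex{\mathcal{X}_2^2}$ I must expand a product of two collective twirls, each already a sum over pairs of slots, so the SWAP trick $\tr[(X\otimes Y)\swap]=\tr(XY)$ together with $\swap^2=\eins$ produces traces of products of up to two-particle reduced states and their correlation tensors $T_{ij}$. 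Sorting these by how many \emph{distinct} particle indices occur yields: one-index terms ($f(N)$ and the $-S_1$, $-2K_1$ pieces), two-index terms ($S_2$, $K_2$-type contractions $\tr(T_{ij}T_{ik}^\top)$ with a shared index, and the $\sum_\mu\ex{J_\mu^2}$ piece which arises from recombining $\sum_{i\neq j}$ one- and two-body correlators into the collective $J_\mu^2$), three-index terms ($K_2$), and four-index terms ($K_2'$). Collecting everything and subtracting $\ex{\mathcal{X}_2}^2=[S_1+K_1]^2$ should assemble precisely the claimed numerator $f(N)+B(\theta)-[S_1+K_1]^2$.

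I would make essential use of the permutation symmetry of the combinatorial sums and of the identities already recorded in the paper: the sector-length interpretation of $S_1,S_2$, the invariance of all these quantities under local unitaries (which guarantees the final expression is reference-frame independent, consistent with the no-go discussion and with $\mathcal{X}_2$ being a twirl), and the fact that $\ex{J_\mu^2}=\tfrac14\sum_{i,j}\tr(\vr_\theta\,\sigma_\mu^{(i)}\sigma_\mu^{(j)})$ decomposes into a trivial diagonal part plus an off-diagonal part expressible through $T_{ij}$. The bookkeeping of which slot (copy~1 vs.\ copy~2) each twirled factor lands on, and the corresponding $\eins$-vs-$\swap$ case split, is what turns one SWAP operator per slot-pair into the correct linear combination of purities and $T$-tensor contractions.

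\textbf{Main obstacle.} The hard part is not any single Haar integral — those are all the elementary $2$-design/SWAP formulas — but the \emph{combinatorial bookkeeping} in $\ex{\mathcal{X}_2^2}$: it is a quadruple sum over particle indices (two from each collective twirl) crossed with the two copy-slots, and one must carefully partition the index tuples by their coincidence pattern (how many of the four indices are equal) and, within each pattern, track which $\swap$'s act on which pair of the four Hilbert-space factors. Miscounting multiplicities in the $i\neq j\neq k$ versus $i\neq j\neq k\neq l$ sums, or dropping cross terms between the one-body ($\swap=\eins$) and two-body ($\swap$) pieces, is the easy way to get a wrong constant; this is precisely why I expect the proof in the Supplemental Material to be organized as a systematic term-by-term expansion, and why I would double-check the result on a small case such as $N=2$ (where $K_2,K_2'$ and the three-/four-index sums are trivial or empty and $F_2$-type invariants are the Makhlin ones) before trusting the general formula.
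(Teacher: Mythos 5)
Your proposal follows essentially the same route as the paper's proof in Appendix~\ref{ap:collective}: decompose $\mathcal{X}_2=\tfrac14\sum_{i,j}\Phi_{ij}$ with $\Phi_{ij}=\tfrac13(2\swap_{ij}-\eins)$, split $\ex{\mathcal{X}_2}$ into diagonal ($S_1$) and off-diagonal ($K_1$) parts, and expand $\ex{\mathcal{X}_2^2}$ as a quadruple sum partitioned by index-coincidence patterns (one through four distinct indices), which is exactly how the paper organizes the terms $\Phi_{ii}^2$, $\Phi_{ii}\Phi_{jj}$, $\Phi_{ii}\Phi_{ij}$, $\Phi_{ij}^2$, $\Phi_{ij}\Phi_{ik}$, $\Phi_{ii}\Phi_{jk}$, and $\Phi_{ij}\Phi_{kl}$. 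You also correctly anticipate which coincidence class produces each piece of $B(\theta)$, including the $\ex{J_\mu^2}$ contributions from the cross terms, so the plan is sound and matches the paper.
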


The proof of Observation~\ref{ob:collective} is given in Appendix~C. In order to distinguish with the precision $\va{\theta}_{2}$ in Observation~\ref{ob:twobody}, we add the subscript $C_2$ in the precision $\va{\theta}_{C_2}$. By definition,  $\va{\theta}_{C_2}$ does not change under any \textit{collective} local unitary $V^{\otimes N}$.

We apply this result to nonlinear metrology with the one-axis twisting Hamiltonian $H=J_x^2$. From a similar calculation with Observation~\ref{ob:jx2twobodyhamscl} (see Appendix~B), the gain $G_{C_2}=[2N \va{\theta}_{C_2}]^{-1}$ is obtained as
\begin{equation}
    \lim_{\theta \to 0} G_{C_2} = \frac{N^2(N-1)}{2 N (3 N-5)+8}.
\end{equation}
This scaling is the same as in Eq.~(\ref{eq:G2nonlinear}).

\section{Experimental implications}
The one-axis twisting Hamiltonian $H=J_x^2$ can be realized in several physical systems. In trapped cold atoms, even addressing the particles is possible \cite{molmer1999multiparticle, bohnet2016quantum}. In a Bose-Einstein condensate, this can be realized by collisional interactions between atoms \cite{sorensen2001many}. Also, the generation of Haar random unitaries can be implemented by stochastic quantum walks on the integrated photonic chips \cite{banchi2017driven,tang2022generating} or a polarization controller operating in scrambling mode \cite{wyderka2023complete}. It is also possible to realize such operations in parallel in an ensemble of many multi-qudit quantum systems \cite{toth2007efficient}.

We note that the Haar integral on $\Phi_k(\mathcal{O})$ as in Eq.~(\ref{eq:ranmeasob}) can be implementable by measuring permutation operators, such as Eq.~(\ref{eq:evaluation_kistwo}) for $k=2$. More generally, the Schur-Weyl duality implies that any observable $\Phi_k(\mathcal{O})$ can be expressed as linear combination of permutation operators~\cite{roberts2017chaos}, i.e., $\Phi_k(\mathcal{O}) = \sum_{\pi \in {\rm Sym}_k} c_{\pi} W_{\pi}$, where $c_{\pi}$ is coefficients that depends on $\mathcal{O}$ and $W_{\pi}$ is a permutation operator corresponding to a permutation $\pi$ in the symmetric group ${\rm Sym}_k$ of order $k$.

\section{Conclusion}
We have proposed (collective) reference-frame-independent metrological schemes with two and four copies of a quantum state. Our formulation is invariant under (collective) local unitaries, which enables us to perform nonlinear metrology with non-local transformations. We analytically computed the precision for several relevant cases, showing that they go beyond the shot-noise limit. 

There are several research directions in which our work can be generalized. First, it would be interesting to show analytical examples of various types of scaling with the nonlinear Hamiltonian $H=J_x^k$ \cite{boixo2007generalized,  boixo2008quantum,  napolitano2011interaction}. Second, by the spirit of spin squeezing, finding metrologically meaningful uncertainty relations from random measurements may give fundamental limitations of precision. Finally, our method may encourage the further development of parameter estimation tasks in terms of multicopy metrology \cite{toth2020activating}, multiparameter scenarios \cite{albarelli2020perspective, demkowicz2020multi}, or temperature estimation in quantum thermodynamics \cite{mehboudi2019thermometry}.

\section{Acknowledgments}
We would like to thank
Francesco Albarelli,
Iagoba Apellaniz,
Jan Lennart Bönsel,
Stefan Nimmrichter,
Luca Pezz\'e,
Augusto Smerzi,
R\'obert Tr\'enyi,
Nikolai Wyderka,
and
Benjamin Yadin
for discussions.

This work was supported by
the DAAD,
the Deutsche Forschungsgemeinschaft (DFG, German Research Foundation, project numbers 447948357,
440958198, and 56343716),
the Sino-German Center for Research Promotion (Project M-0294),
the ERC (Consolidator Grant 683107/TempoQ),
the German Federal Ministry of Research, Technology and Space ((Project QuKuK,
Grant No. 16KIS1618K and Project BeRyQC, Grant
No. 13N17292),
National Research, Development and Innovation Office of Hungary (NKFIH) (Grant No. 2019-2.1.7-ERA-NET-2021-00036, Advanced Grant No. 152794),
and the National Research, Development and Innovation Office of Hungary (NKFIH) within the Quantum Information National Laboratory of Hungary.
We acknowledge the support of the EU 
(QuantERA MENTA, QuantERA QuSiED),
the Spanish MCIU 
(No. PCI2022-132947), and the Basque Government 
(No. IT1470-22). 
We acknowledge the support of the Grant~No.~PID2021-126273NB-I00 funded by MCIN/AEI/10.13039/501100011033 and by "ERDF A way of making Europe".  
We thank the "Frontline" Research Excellence Programme of the NKFIH (Grant No. KKP133827).
G.~T. acknowledges a  Bessel Research Award of the Humboldt Foundation.
We acknowledge Project no. TKP2021-NVA-04, which has been implemented with the support provided by the Ministry of Innovation and Technology of Hungary from the National Research, Development and Innovation Office (NKFIH), financed under the TKP2021-NVA funding scheme. We thank support from Horizon Europe programme HORIZON-CL4-2022-QUANTUM-02-SGA via the project 101113690 (PASQuanS2.1) and JST ASPIRE (JPMJAP2339).

\appendix
\pagenumbering{arabic}
\addtocounter{theorem}{-3}

\section{Proof of Observation~\ref{ob:fourcopy}}\label{ap:fourcopy}
\begin{observation}
Consider the four copies of an $N$-qubit system, that is, $k=4$, $d=2$, and $M_i^{(4)} = \Phi_4(\sigma_z^{(i)})$. Then, the error-propagation formula leads to 
\begin{equation} 
    \va{\theta}_4
    = \frac{15N -20 S_1(\theta)+8 F_1(\theta)+2F_2(\theta)- 3F_1^2(\theta)}
    {3|\partial_\theta F_1(\theta)|^2},
\end{equation}
where $F_1(\theta) = \sum_{i=1}^N [2\tr(\vr_i^2)-1]^2$ for $\vr_i = \tr_{\Bar{i}} (\vr_\theta)$ and 
\begin{equation}
    F_2(\theta) = \sum_{i<j}
\left\{
[\tr(T_{ij} T_{ij}^\top)]^2
+2\tr(T_{ij} T_{ij}^\top T_{ij}T_{ij}^\top)
\right\},
\end{equation}
with the matrix $T_{ij}$ with the elements $[T_{ij}]_{\mu \nu} = \tr(\vr_{ij} \sigma_\mu \otimes \sigma_\nu)$ for $\vr_{ij} = \tr_{\overline{ij}} (\vr_\theta)$ and $\mu, \nu = x,y,z$.
\end{observation}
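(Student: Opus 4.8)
The plan is to run the same scheme as in the proof of Observation~\ref{ob:twobody}, now for four copies of qubits. First I would evaluate the Haar integral defining $\Phi_4(\sigma_z)$. For $d=2$ the orbit $\{U^\dagger\sigma_z U:U\in\mathrm{SU}(2)\}$ is exactly the set $\{\hat n\cdot\vec\sigma:\hat n\in S^2\}$ with the rotationally invariant measure, so $\Phi_4(\sigma_z)=\int\frac{d\hat n}{4\pi}(\hat n\cdot\vec\sigma)^{\otimes 4}$. The fourth moment of a uniform unit vector, $\langle n_\mu n_\nu n_\alpha n_\beta\rangle=\tfrac{1}{15}(\delta_{\mu\nu}\delta_{\alpha\beta}+\delta_{\mu\alpha}\delta_{\nu\beta}+\delta_{\mu\beta}\delta_{\nu\alpha})$, together with $\sum_\mu\sigma_\mu\otimes\sigma_\mu=2\swap-\eins$, then yields the compact form
\begin{equation}
\Phi_4(\sigma_z)=\frac{1}{15}\sum_{\{ab|cd\}}(2\swap_{ab}-\eins)(2\swap_{cd}-\eins),
\end{equation}
where the sum runs over the three pairings of the four copies and $\swap_{ab}$ swaps copies $a$ and $b$ of the particle. (Weingarten calculus with $\mathrm{Wg}(\cdot,2)$ gives the same result, but the Bloch-sphere route is cleaner at $d=2$.)

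Next I would insert $M_4=\sum_i\Phi_4(\sigma_z^{(i)})$ into Eq.~(\ref{eq:errorpropagationformula}). Since $\Phi_4(\sigma_z^{(i)})$ acts only on the four copies of particle $i$, we have $\ex{M_4}=\sum_i\tr(\vr_i^{\otimes 4}\Phi_4(\sigma_z))$ with $\vr_i=\tr_{\bar i}(\vr_\theta)$; using $\swap^2=\eins$ and the SWAP trick, each pairing contributes $[2\tr(\vr_i^2)-1]^2$, so $\ex{M_4}=\tfrac{1}{5}F_1(\theta)$ and $|\partial_\theta\ex{M_4}|^2=\tfrac{1}{25}|\partial_\theta F_1(\theta)|^2$. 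For the second moment I would write $\ex{M_4^2}=\sum_{i,j}\tr(\vr_\theta^{\otimes 4}\Phi_4(\sigma_z^{(i)})\Phi_4(\sigma_z^{(j)}))$ and split it into $i=j$ and $i\neq j$. In the diagonal block one reduces to a single qubit $\vr_i$, expands $\Phi_4(\sigma_z)^2$ (a product of SWAP operators) into a linear combination of permutation operators of $S_4$, and applies the SWAP trick with the qubit identities $\tr(\vr^2)=\tfrac{1+r^2}{2}$, $\tr(\vr^3)=\tfrac{1+3r^2}{4}$, $\tr(\vr^4)=\tfrac{1+6r^2+r^4}{8}$, where $r_i^2=2\tr(\vr_i^2)-1$; collecting terms should give $\sum_i(15-20r_i^2+8r_i^4)/75=(15N-20S_1+8F_1)/75$. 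In the off-diagonal block one reduces to $\vr_{ij}^{\otimes 4}$ and uses the Pauli form of each factor: because the twirled observable of a traceless $\mathcal{O}$ is a sum of strings of non-identity Paulis, every contraction produces factors $[T_{ij}]_{\mu\nu}=\tr(\vr_{ij}\sigma_\mu\otimes\sigma_\nu)$, and matching the two independent triples of pairings gives $[\tr(T_{ij}T_{ij}^\top)]^2$ when the two pairings coincide (three cases) and $\tr(T_{ij}T_{ij}^\top T_{ij}T_{ij}^\top)$ otherwise (six cases), summing to $2F_2(\theta)/75$.

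Assembling, $\va{M_4}=\ex{M_4^2}-\ex{M_4}^2=\tfrac{1}{75}(15N-20S_1+8F_1+2F_2-3F_1^2)$, and dividing by $|\partial_\theta\ex{M_4}|^2=\tfrac{3}{75}|\partial_\theta F_1|^2$ reproduces Eq.~(\ref{eq:variancefour}). The main obstacle is the bookkeeping in the second-moment step: reducing the product of six SWAP operators in $\Phi_4(\sigma_z)^2$ to a clean combination of $S_4$ permutations and evaluating each single-qubit trace consistently (the permutation operators are linearly dependent at $d=2$), together with the combinatorial identity matching the nine pairing-versus-pairing contractions in the $i\neq j$ block to $2F_2$, and checking that no mixed one-body$\times$two-body terms survive. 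Useful consistency checks are that the $i\neq j$ part of $\va{M_4}$ vanishes on product states and that $\tr(\vr^{\otimes 4}\Phi_4(\sigma_z)^2)$ evaluates to $1/25$ on pure qubit states.
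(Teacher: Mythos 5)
Your proposal is correct --- I checked the coefficients and they all reproduce the paper's result: the diagonal block indeed evaluates to $\sum_i(15-20r_i^2+8r_i^4)/75$, the off-diagonal block to $2F_2/75$, and the assembly gives Eq.~(\ref{eq:variancefour}). The skeleton (first moment, then a diagonal/off-diagonal split of $\ex{M_4^2}$, both resting on the fourth moment $\frac{1}{15}(\delta_{ab}\delta_{cd}+\delta_{ac}\delta_{bd}+\delta_{ad}\delta_{bc})$) coincides with the paper's, and your treatment of the first moment and of the $i\neq j$ block is essentially identical to theirs. Where you genuinely diverge is the diagonal block $\sum_i\ex{\Phi_4(\sigma_z^{(i)})^2}$: the paper never writes $\Phi_4(\sigma_z)$ as an explicit operator; instead it doubles the Hilbert space via the SWAP trick $\tr(XY^2)=\tr[(XY\otimes Y)\swap]$, turning the square into a single trace over eight copies, and then performs \emph{two} independent Haar integrals with bookkeeping of the terms $\mathcal{C}_j^{(i)}$ generated by the Pauli product rule. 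You instead compute the twirl once in closed form, $\Phi_4(\sigma_z)=\frac{1}{15}\sum_{\{ab|cd\}}(2\swap_{ab}-\eins)(2\swap_{cd}-\eins)$, square it as an operator, and reduce everything to traces of $S_4$ permutation operators against $\vr^{\otimes 4}$, i.e.\ to products of $\tr(\vr^k)$. Your route avoids the second Haar integral and the Levi-Civita bookkeeping entirely, at the price of multiplying out products of SWAPs and tracking cycle types (which you correctly flag as the main labor, including the linear dependences among permutation operators at $d=2$); the paper's route avoids that combinatorics but pays with the $\mathcal{C}_j$ accounting. Both are sound, and your consistency checks (value $1/25$ on pure states, vanishing of the $i\neq j$ covariance on product states) do hold.
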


\begin{proof}
We begin by evaluating the form of $\ex{M_4}$ as follows
\begin{align}
    \ex{M_4}
    &= \sum_{i=1}^N \tr\left[
    \vr_i^{\otimes 4} \Phi_4(\sigma_z^{(i)})
    \right]\nonumber\\
    &= \frac{1}{2^4} \sum_{i=1}^N
    \sum_{a,b,c,d}
    r_a^{(i)}
    r_b^{(i)}
    r_c^{(i)}
    r_d^{(i)}
    \mathcal{I}(a,b,c,d)\nonumber\\
    &=\frac{1}{5} F_1(\theta).
\end{align}
Here in the first line, we use that $\Phi_4(\sigma_z^{(i)})$ only acts on the four copies of the $i$-th system. In the second line, we denote that for $r_\mu^{(i)} = \tr[\vr_i \sigma_\mu^{(i)}]$ for $\mu=a,b,c,d=x,y,z$ and
\begin{equation}
    \mathcal{I}(a,b,c,d)
    =
    \int dU\,
    Z_{U,a}^{(i)}
    Z_{U,b}^{(i)}
    Z_{U,c}^{(i)}
    Z_{U,d}^{(i)},
\end{equation}
where $Z_{U,a}^{(i)} =\tr[\sigma_a^{(i)} U^\dagger \sigma_z^{(i)} U]$. In the third line, we apply the following formula
\begin{equation}
    \mathcal{I}(a,b,c,d)
    =\frac{16}{15}
    \left(
    \delta_{a,b} \delta_{c,d}
    + \delta_{a,c} \delta_{b,d}
    + \delta_{a,d} \delta_{b,c}
    \right),
\end{equation}
given in Ref.~\cite{wyderka2023complete} and introduce the fourth-order quantity
\begin{equation}
    F_1(\theta) = \sum_{i=1}^N r_i^4,
\end{equation}
where $r_i^2 = \sum_{\mu=x,y,z} [r_\mu^{(i)}]^2= 2\tr[\vr_i^2]-1$. 

Next, we will evaluate the expression of the variance: $\va{M_4} = \ex{M_4^2} - \ex{M_4}^2$, where
\begin{equation}
    \ex{M_4^2} = \sum_{i} \ex{\Phi_4(\sigma_z^{(i)})^2} + \sum_{i \neq j } \ex{\Phi_4(\sigma_z^{(i)}) \Phi_4(\sigma_z^{(j)})}.
\end{equation}
The first term in $\ex{M_4^2}$ is given by
\begin{align}
    \nonumber
    &\sum_{i} \ex{\Phi_4(\sigma_z^{(i)})^2}
    \\
    =&\sum_{i}
    \tr\left\{
    \left[\vr_i^{\otimes 4} \Phi_4(\sigma_z^{(i)})\right]_X
    \otimes
    [\Phi_4(\sigma_z^{(i)})]_Y
    \,
    \swap_{X,Y}
    \right\}\nonumber\\
    =&\frac{1}{2^4}\sum_{i}
    \int dU_X \int dU_Y
    \left[
    \tr\left(
    \chi_x^{(i)} \otimes \upsilon_y^{(i)}
    \swap_{x, y}
    \right)
    \right]^4\nonumber\\
    =&\frac{1}{2^4 \cdot 2^4}
    \sum_{i}
    \int dU_X \int dU_Y
    \left[
    \sum_{\alpha}
    \mathcal{Z}_{U_X, \alpha}^{(i)}
    Z_{U_Y,\alpha}^{(i)}
    \right]^4
    \nonumber\\
    =&\frac{1}{2^4\cdot 2^4}
    \sum_{i}
    \sum_{\alpha,\beta,\gamma,\delta}
    \mathcal{I}(\alpha,\beta,\gamma,\delta)
    \sum_{j=0}^4 \mathcal{C}_{j}^{(i)}\nonumber\\
    =&\frac{1}{5 \times 15}
    \left[
    15N-20 S_1(\theta)+8 F_1(\theta)
    \right].
\end{align}
Here in the first equality, we divide the squared term into two different spaces $X, Y$ inversely using the SWAP trick mentioned in the proof of Observation~\ref{ob:twobody} in the main text: $\tr(XY^2) = \tr[(XY \otimes Y)\swap]$. The SWAP $\swap_{X, Y}$ acts on the eight-qubit system, where each system $X=\{x_1, x_2, x_3, x_4\}$ and $Y=\{y_1, y_2, y_3, y_4\}$ is the four-copy of a single-qubit system.

In the second equality, we use that the SWAP operator in many qubits can be realized by the SWAP operators in individual qubits~\cite{ekert2002direct}, that is,
\begin{equation}
    \swap_{X, Y} =
\swap_{x_1, y_1}
\otimes
\swap_{x_2, y_2}
\otimes
\swap_{x_3, y_3}
\otimes
\swap_{x_4, y_4}.
\end{equation}
Also, we denote that
\begin{subequations}
    \begin{align}
        \chi_x^{(i)}
        &=
        U_X^\dagger \sigma_z^{(i)} U_X + \sum_{a=x,y,z} r_a^{(i)} \sigma_a^{(i)} U_X^\dagger \sigma_z^{(i)} U_X,
        \\
        \upsilon_y^{(i)}
        &= U_Y^\dagger \sigma_z^{(i)} U_Y,
    \end{align}
\end{subequations}
where $r_a^{(i)} = \tr[\vr_{i} \sigma_a^{(i)}]$.

In the third equality, we apply the formulas
\begin{subequations}
    \begin{align}
    \swap
    &= \frac{1}{2}\left(
    \eins_{2}^{\otimes 2}
    +
    \sum_{\alpha=x,y,z}
    \sigma_\alpha
    \otimes
    \sigma_\alpha
    \right),
    \\
    \label{eq:pauliproperty}
    \sigma_p \sigma_q
    &= \delta_{p,q}\eins_2 + i\sum_{r=x,y,z}
    \varepsilon_{p,q,r} \sigma_{r},
    \end{align}
\end{subequations}
with the Kronecker-delta symbol $\delta_{p,q}$ and the Levi-Civita symbol $\varepsilon_{p,q,r}$, and denote that
\begin{equation}
    \mathcal{Z}_{U_X, \alpha}^{(i)}
    =
    Z_{U_X,\alpha}^{(i)}
    +i
    \sum_{a, k =x,y,z}
    \varepsilon_{\alpha, a, k}
    r_a^{(i)}
    Z_{U_X, k}^{(i)},
\end{equation}
where $Z_{U_X,\alpha}^{(i)} =\tr[\sigma_\alpha^{(i)} U_X^\dagger \sigma_z^{(i)} U_X]$.

In the fourth equality, we denote that
\begin{equation}
    \int dU_X\,
    \mathcal{Z}_{U_X, \alpha}^{(i)}
    \mathcal{Z}_{U_X, \beta}^{(i)}
    \mathcal{Z}_{U_X, \gamma}^{(i)}
    \mathcal{Z}_{U_X, \delta}^{(i)}
    =\sum_{j=0}^4 \mathcal{C}_{j}^{(i)},
\end{equation}
where $\alpha,\beta,\gamma,\delta=x,y,z$ and the label $j$ in $\mathcal{C}_{j}^{(i)}$ represents the number of times the imaginary unit $i$ is multiplied.

In the final equality, we indeed evaluate all the terms in $\mathcal{C}_{j}^{(i)}$ and simplify the expression. Note that $\mathcal{C}_{1}^{(i)}=\mathcal{C}_{3}^{(i)}=0$ for any $i$ due to the properties of the Kronecker delta and Levi-Civita symbol.

Let us continue the computation of the variance. The second term in $\ex{M_4^2}$ can be given by
\begin{align}
    \nonumber
    &\sum_{i \neq j}
    \ex{\Phi_4(\sigma_z^{(i)}) \Phi_4(\sigma_z^{(j)})}
    \\
    = &\sum_{i \neq j}
    \tr[\vr_{ij}^{\otimes 4} \Phi_4(\sigma_z^{(i)}) \Phi_4(\sigma_z^{(j)})]\nonumber\\
    = &\frac{1}{4^4}
    \sum_{i \neq j}
    \sum_{
    \textit{\textbf{a}}, \textit{\textbf{b}}}
    t_{a_1 b_1}^{(i j)}
    t_{a_2 b_2}^{(i j)}
    t_{a_3 b_3}^{(i j)}
    t_{a_4 b_4}^{(i j)}
    \mathcal{I}(\textit{\textbf{a}})
    \mathcal{I}(\textit{\textbf{b}})\nonumber\\
    = &\frac{2}{5 \times 15}F_2(\theta).
\end{align}
In the second equality, we denote that $\textit{\textbf{a}} = (a_1, a_2, a_3, a_4)$, $\textit{\textbf{b}} = (b_1, b_2, b_3, b_4)$, and $t_{a_p b_p}^{(i j)} = \tr(\vr_{i j} \sigma_{a_p} \otimes \sigma_{b_p})$ for $a_p, b_p =x,y,z$. In the third equality, we use that the sector length can be given by $S_2(\theta) = \sum_{i<j} \tr(T_{ij}T_{ij}^\top) = \sum_{i<j} [4\tr(\vr_{ij}^2)-1-S_1(\theta)]$ with the matrix $[T_{ij}]_{ab} = t_{a b}^{(i j)}$ and introduce the fourth-order two-body quantity
\begin{equation}
    F_2(\theta) = \sum_{i<j}
    \left\{
    [\tr(T_{ij} T_{ij}^\top)]^2
    +2\tr(T_{ij} T_{ij}^\top T_{ij}T_{ij}^\top)
    \right\}.
\end{equation}
Hence we can complete the proof.
\end{proof}
\section{Derivation of Observation~\ref{ob:jx2twobodyhamscl}}\label{ap:jx2twobodyhamscl}

\begin{observation}
    Consider that $\ket{\psi_\theta} = e^{-i\theta H}\ket{1}^{\otimes N}$ and $H=J_x^2$, where $J_x = \frac{1}{2} \sum_{i=1}^N \sigma_x^{(i)}$. Then, the gain in Eq.~(\ref{eq:gainkcopy}) in the main text is obtained as
\begin{subequations} 
\begin{align}
    \label{eq:app:g2}
    \lim_{\theta\to 0}
    G_2
    &=\frac{N-1}{4},\\
    \label{eq:app:g4}
    \lim_{\theta\to 0}
    G_4
    &=\frac{3(N-1)}{8},
\end{align}
\end{subequations}
for $k=2$ and $k=4,$ respectively.
\end{observation}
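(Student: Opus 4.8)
The plan is to compute the four local-unitary invariants $S_1(\theta)$, $S_2(\theta)$, $F_1(\theta)$, $F_2(\theta)$ for the state $\ket{\psi_\theta} = e^{-i\theta J_x^2}\ket{1}^{\otimes N}$ explicitly near $\theta=0$, and then substitute into the closed forms of $\va{\theta}_2$ and $\va{\theta}_4$ from Observations~\ref{ob:twobody} and \ref{ob:fourcopy}, followed by forming the gains via Eq.~(\ref{eq:gainkcopy}) and taking the limit $\theta\to 0$. Because $H=J_x^2$ commutes with every permutation of the qubits and $\ket{1}^{\otimes N}$ is permutation-symmetric, $\ket{\psi_\theta}$ is symmetric; hence it suffices to determine a single reduced one-qubit state $\vr_1(\theta)$ and a single reduced two-qubit state $\vr_{12}(\theta)$, then multiply by the combinatorial factors $N$ and $N(N-1)/2$ (or $N(N-1)$ for ordered pairs). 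The reduced two-qubit density matrix of a one-axis-twisted coherent state is known in closed form --- I would invoke the pairwise-correlation results of Ref.~\cite{wang2002pairwise} --- giving $\tr(\vr_1^2)$, $\tr(\vr_{12}^2)$, and the full correlation matrix $T_{12}$ with entries $\tr(\vr_{12}\,\sigma_\mu\otimes\sigma_\nu)$ as explicit trigonometric functions of $\theta$ and $N$.

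First I would Taylor-expand each needed quantity to the lowest nontrivial order in $\theta$: since $\ket{1}^{\otimes N}$ is a product state, at $\theta=0$ all reduced states are pure and aligned, so $S_1(0)=N(d-1)$-type constants, $S_2(0)$, $F_1(0)$, $F_2(0)$ take their product-state values, while $\partial_\theta S_1$ and $\partial_\theta F_1$ vanish at $\theta=0$ (the leading $\theta$-dependence of purities is quadratic). The key is therefore that both numerator and denominator of the error-propagation formula must be expanded consistently: the denominators $|\partial_\theta S_1(\theta)|^2$ and $|\partial_\theta F_1(\theta)|^2$ are $O(\theta^2)$ and the numerators, after the product-state pieces cancel, are also $O(\theta^2)$, so the ratio has a finite limit. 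I would track the coefficient of $\theta^2$ in $S_1$, $S_2$, $F_1$, $F_2$ and in $(\partial_\theta S_1)^2=[\partial_\theta^2 S_1(0)]^2\theta^2+\dots$, then assemble $\lim_{\theta\to0}\va{\theta}_k$ as a ratio of these coefficients, and finally divide by $kN$ to get $G_k$.

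The main obstacle I anticipate is the bookkeeping for $\va{\theta}_4$: unlike $\va{\theta}_2$, which only needs purities, $\va{\theta}_4$ requires $F_2(\theta)=\sum_{i<j}\{[\tr(T_{ij}T_{ij}^\top)]^2+2\tr(T_{ij}T_{ij}^\top T_{ij}T_{ij}^\top)\}$, so I need the individual eigenvalues (or at least the power-sum traces $\tr(T_{12}T_{12}^\top)$ and $\tr[(T_{12}T_{12}^\top)^2]$) of the $3\times 3$ matrix $T_{12}T_{12}^\top$ as functions of $\theta$, expanded to $O(\theta^2)$. One must be careful that near $\theta=0$ the state is nearly a product of $\ket{1}$'s, so $T_{12}\approx \mathrm{diag}(0,0,1)$ plus small corrections, and the corrections to different entries enter at different orders in $\theta$ and in $1/N$ --- getting the $N$-scaling of the surviving $\theta^2$-coefficient right (so that $G_2$ and $G_4$ both come out $\propto N$ with the stated prefactors $1/4$ and $3/8$) is where an error is most likely. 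A useful consistency check is that the ratio $\lim_{\theta\to0}G_4/\lim_{\theta\to0}G_2 = 3/2$ must hold, which constrains the relative coefficients and lets me catch mistakes before finalizing. The detailed substitutions are routine once the expansions of $\tr(\vr_1^2)$, $\tr(\vr_{12}^2)$, $\tr(T_{12}T_{12}^\top)$, and $\tr[(T_{12}T_{12}^\top)^2]$ are in hand, so I would relegate them to the appendix and present only the resulting coefficients in the main argument.
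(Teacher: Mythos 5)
Your proposal follows essentially the same route as the paper's proof: exploit permutation symmetry of $\ket{\psi_\theta}$ to reduce everything to a single one-qubit and a single two-qubit marginal, take the closed-form Bloch vector and correlation matrix from Ref.~\cite{wang2002pairwise}, assemble $S_1,S_2,F_1,F_2$ with the factors $N$ and $N(N-1)/2$, substitute into $\va{\theta}_2$ and $\va{\theta}_4$, and take $\theta\to 0$. Your observation that both numerator and denominator vanish to order $\theta^2$ (so the gain limit is a ratio of $\theta^2$-coefficients) is correct and matches what the paper's explicit trigonometric expressions yield.
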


Here we will give the the explicit expressions of $S_1(\theta), S_2(\theta), F_1(\theta)$, and $F_2(\theta)$. Let us begin by recalling that all the single-qubit and two-qubit reduced states $\vr_i$ and $\vr_{ij}$ are the same for $i,j=1,\ldots,N$. Then we denote that $\vr_i = \vr_1$ and $\vr_{ij} = \vr_2$. According to the result in Ref.~\cite{wang2002pairwise}, for $r_\mu = \tr(\vr_1 \sigma_\mu)$ and $t_{\mu \nu} = \tr(\vr_2 \sigma_\mu \otimes \sigma_\nu)$, we can have
\begin{subequations} 
\begin{align}
    r_x &= r_y = 0,
    \quad
    r_z = -\cos^{N-1}(\theta),
    \\
    t_{xx} &= t_{xz} = t_{zx} = t_{yz} = t_{zy} = 0,
    \\
    t_{yy} &= \frac{1}{2} \left[1-\cos^{N-2}(2 \theta )\right],
    \\
    t_{zz} &= \frac{1}{2} \left[\cos^{N-2}(2 \theta )+1\right],
    \\
    t_{xy} &= t_{yx} = \sin (\theta ) \cos^{N-2}(\theta ).
\end{align}
\end{subequations} 
This yields
\begin{subequations} 
\begin{align}
    S_1(\theta)
    &= N \sum_{\mu = x,y,z} r_\mu^2
    = N \cos^{2 N-2}(\theta ),
    \\
    \nonumber
    S_2(\theta)
    &= \frac{N(N-1)}{2} \sum_{\mu,\nu = x,y,z} t_{\mu \nu}^2
    \\ \nonumber
    &= \frac{N(N-1)}{4} \Big[\cos^{2 (N-2)}(2 \theta )
    \\
    &\quad+4 \sin^2(\theta ) \cos^{2 N-4}(\theta )+1\Big],
    \\
    F_1(\theta)
    &= N \Big[\sum_{\mu = x,y,z} r_\mu^2\Big]^2 = N \cos^{4 N-4}(\theta ),
    \\
    F_2(\theta)
    &= \frac{N(N-1)}{2} \Bigg\{
    \Big[\sum_{\mu,\nu = x,y,z} t_{\mu \nu}^2\Big]^2
    \nonumber
    \\
    &\quad+ 2 \sum_{\mu. \nu, \xi, \kappa = x,y,z} t_{\mu \nu}t_{\mu \kappa}t_{\xi \nu}t_{\xi \kappa}
    \Bigg\}
    \nonumber
    \\
    &= \frac{N(N-1)}{4} \Bigg\{
    \cos^{4 (N-2)}(2 \theta )
    \nonumber
    \\
    &\qquad
    -8 \sin^2(\theta ) \cos^{2 N-4}(\theta ) \cos^{N-2}(2 \theta )
    \nonumber
    \\
    \nonumber
    &\qquad
    +\cos^{2 (N-2)}(2 \theta ) \left[8 \sin^2(\theta ) \cos^{2 N-4}(\theta )+4\right]
    \\
    &\qquad
    +\left(4 \sin^2(\theta ) \cos^{2 N-4}(\theta )+1\right)^2\Bigg\}.
\end{align}
\end{subequations} 
Substituting these expressions into the precisions $\va{\theta}_2$ in Eq.~(\ref{eq:twobodyvar}) and $\va{\theta}_4$ in Eq.~(\ref{eq:variancefour}) in the main text and taking the limit $\theta \to 0$, we can arrive at the results in Eqs.~(\ref{eq:app:g2}, \ref{eq:app:g4}).

\section{Proof of Observation~\ref{ob:collective}}\label{ap:collective}

\begin{observation}
Consider the two copies of an $N$-qubit system with the 
collective randomized observable $\mathcal{X}_2$. 
The error-propagation formula leads to
\begin{equation} 
    \va{\theta}_{C_2}
    =\frac{f(N) + B(\theta)-[S_1(\theta) + K_1(\theta)]^2}
    {|\partial_\theta [S_1(\theta) + K_1(\theta)]|^2},
\end{equation}
where $f(N) = 3N(-2N+3)$ and 
\begin{align} \nonumber
    B(\theta)
    &= 2 [-S_1(\theta) -2K_1(\theta) + S_2(\theta) + K_2(\theta)]
    + K_2^\prime(\theta)
    \\
    &\quad
    + 16(N-1)\sum_{\mu=1}^3 \ex{J_\mu^2}.
\end{align}
Here we define that $K_1(\theta)= \sum_{i\neq j}^N \sum_{\mu =1}^3 \tr[(\vr_{i} \otimes \vr_{j})(\sigma_\mu \otimes \sigma_\mu)]$ for $\vr_i = \tr_{\Bar{i}} (\vr_\theta)$ and
\begin{equation}
    K_2(\theta) \!=\! \! \! \! \sum_{i\neq j \neq k} \! \! \! \!
    \tr(T_{ij}T_{ik}^\top),
    \quad
    K_2^\prime(\theta) \!=\! \! \! \! \sum_{i\neq j \neq k \neq l} \! \! \! \!
    \tr(T_{ik}T_{jl}^\top),
\end{equation}
with the matrix $T_{ij}$ with the elements $[T_{ij}]_{\mu \nu} = \tr(\vr_{ij} \sigma_\mu \otimes \sigma_\nu)$ for $\vr_{ij} = \tr_{\overline{ij}} (\vr_\theta)$ and $\mu, \nu = x,y,z$.
\end{observation}

\begin{proof}
We begin by denoting
\begin{align} \nonumber
    \mathcal{X}_2
    &= \int dU \,
    \left({U^\dagger}^{\otimes N} J_z U^{\otimes N}\right)^{\otimes 2}
    \\ \nonumber
    &=\frac{1}{4} \sum_{i,j=1}^N
    \int dU \,
    \left({U^\dagger} \sigma_z^{(i)} U\right)
    \otimes
    \left({U^\dagger} \sigma_z^{(j)} U\right)
    \\
    &\equiv
    \frac{1}{4} \sum_{i,j=1}^N \Phi_{ij},
\end{align}
where
\begin{equation} \label{eq:Phiij}
    \Phi_{ij}
    =\frac{1}{3}(2\swap_{ij}-\eins)
    =\frac{1}{3}\sum_{\mu=x,y,z}
    \sigma_\mu^{(i)} \otimes \sigma_\mu^{(j)}.
\end{equation}
Here the $i$ denotes the $i$-th system in the first copy and the $j$ denotes the $j$-th system in the second copy. Then we can immediately evaluate the form of $\ex{\mathcal{X}_2}$ as follows
\begin{align}\nonumber
    \ex{\mathcal{X}_2}
    &= \tr(\vr^{\otimes 2} \mathcal{X}_2)
    \\ \nonumber
    &= \frac{1}{4} \sum_{i,j=1}^N \tr[(\vr_i \otimes \vr_j) \Phi_{ij}]
    \\
    &= \frac{1}{3\times 4}\left[S_1(\theta) + K_1(\theta) \right],
\end{align}
where $\vr_{i} = \tr_{\overline{i}}(\vr_\theta)$ is the single-particle reduced state.

Next, we will evaluate the form of the variance: $\va{\mathcal{X}_2} = \ex{\mathcal{X}_2^2} - \ex{\mathcal{X}_2}^2$. Here, a straightforward calculation yields
\begin{align} \nonumber
    \mathcal{X}_2^2
    &\!=\!
    \frac{1}{16}\Bigg\{ \!
    \sum_{i} \Phi_{ii}^2
    \!+\!
    \sum_{i\neq j}\left[
    \Phi_{ii}\Phi_{jj}
    +4\Phi_{ii}\Phi_{ij}
    +2\Phi_{ij}^2
    \right]
    \\
    &\!+\!
    \! \!
    \sum_{i\neq j\neq k}\left[
    4\Phi_{ij}\Phi_{ik}
    +2\Phi_{ii}\Phi_{jk}
    \right]
    \!+\!
    \sum_{i\neq j\neq k \neq l}
    \Phi_{ij}\Phi_{kl}
    \Bigg\},
\end{align}
where we used that $\Phi_{ij} = \Phi_{ji}$. To find the explicit form of $\va{\mathcal{X}_2}$, we must evaluate all these expectations. A long calculation leaves us with
\begin{subequations} 
\begin{align}
    &\sum_{i} \ex{\Phi_{ii}^2}
    = \frac{1}{9}[3N-2S_1(\theta)],
    \\
    &\sum_{i\neq j}\ex{\Phi_{ii}\Phi_{jj}}
    =\frac{2}{9}S_2(\theta),
    \\
    &\sum_{i\neq j}\ex{\Phi_{ii}\Phi_{ij}}
    =\frac{1}{9}\left[    4\sum_{\mu=x,y,z}\tr(\vr_\theta J_\mu^2) - 3 N    \right],
    \\
    &\sum_{i\neq j}\ex{\Phi_{ij}^2}
    =    \frac{1}{9}\left[   3N(N-1)-2K_1(\theta)    \right],
    \\
    &\sum_{i\neq j \neq k}\ex{\Phi_{ij}\Phi_{ik}}
    =\frac{(N-2)}{9}\left[ 4 \!\!
    \sum_{\mu=x,y,z} \tr(\vr_\theta J_\mu^2) - 3N    \right],
    \\
    &\sum_{i\neq j \neq k} \ex{\Phi_{ii}\Phi_{jk}}
    =\frac{1}{9}\sum_{i\neq j \neq k} K_2(\theta),
    \\
    &\sum_{i\neq j\neq k \neq l} \ex{\Phi_{ij}\Phi_{kl}}
    =\frac{1}{9}\sum_{i\neq j \neq k} K_2^\prime(\theta).
\end{align}
\end{subequations}
Summarizing these terms, we can complete the proof. Here, it might be useful for some readers to note that
\begin{equation}
    \Phi_{ij}^2
    \!=\! \frac{1}{9} \!
    \sum_{\mu, \nu=x,y,z} \!
    \sigma_\mu^{(i)}\sigma_\nu^{(i)}
    \otimes
    \sigma_\mu^{(j)}\sigma_\nu^{(j)}
    \!=\! \frac{1}{9}\left(5\eins -4\swap_{ij}\right),
\end{equation}
where we recall Eq.~(\ref{eq:Phiij}) and use the property in Eq.~(\ref{eq:pauliproperty}) and
$\sum_{\mu, \nu = x,y,z} \varepsilon_{\mu, \nu, \xi}\varepsilon_{\mu, \nu, \xi^\prime}
= 2 \delta_{\xi, \xi^\prime}$.
\end{proof}

\noindent
\textbf{Remark.}
Consider a permutationally invariant state: $\vr = \swap_{ij} \vr \swap_{ij}$ for all $i,j \in \{1,2,\ldots, N\}$ with $i\ne j$, for details see Ref.~\cite{toth2009entanglement}. Since $\swap (X \otimes Y) \swap = Y \otimes X$ for operators $X,Y$, any permutationally invariant state $\vr$ has the same Bloch vector $\vec{r} = \vec{r}_i$ and the same matrix $T = T_{ij}$ for all $i\ne j$ with $[T]_{\mu \nu} = [T]_{\nu \mu}$. Thus we can simplify the expressions
\begin{subequations}
\begin{align}
&S_1(\theta) + K_1(\theta)
= N^2 r^2(\theta),\\
&B(\theta)
= N \Big\{r^2 (2-4 N)+(N-1) \big[(3+N(N-3))t_2(\theta)
\nonumber
\\
&\quad \quad
+4 (N-1) t_1(\theta) + 12\big]\Big\},
\end{align}
\end{subequations}
where we denoted that $r^2(\theta) = |\vec{r}|^2$, $t_1(\theta) = \tr(T)$, and $t_2(\theta) = \tr(T^2)$. To find the above simplification, we used 
\begin{subequations}
\begin{align}
    S_1 &= N r^2,
    \\
    K_1 &= N(N-1) r^2,
    \\
    S_2 &= \frac{N(N-1)}{2}t_2,
    \\
    K_2 &= N(N-1)(N-2)t_2,
    \\
    K_2^\prime &= N(N-1)(N-2)(N-3)t_2,
    \\
    \sum_{\mu=x,y,z}
    \ex{J_\mu^2}
    &=\frac{1}{4}[3N+N(N-1)t_1],
\end{align}
\end{subequations}
where we abbreviated the notation of $\theta$.

\section{Exponential scaling}\label{ap:fullbodyhamscl}
Here we formulate the exponential scaling:
\begin{observation}\label{ob:fullbodyhamscl}
    Consider that $\ket{\psi_\theta^{n}} = e^{-i\theta H}\ket{0}^{\otimes N}$ and $H = Q_n$ for $n=1,2$ such that $Q_1 + iQ_2 = (\sigma_x + i \sigma_y)^{\otimes N}$. Then the gain in Eq.~(\ref{eq:gainkcopy}) in the main text is obtained as
\begin{subequations}
\begin{align}
    \lim_{\theta\to 0}
    G_2 &= \frac{4^N}{N+1},\\
    \lim_{\theta\to 0}
    G_4 &= \frac{3 \times 2^{2 N+1}}{3 N+1},
\end{align}
\end{subequations}
for $k = 2$ and $k = 4$, respectively.
\end{observation}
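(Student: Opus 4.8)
The plan is to exploit that $(\sigma_x+i\sigma_y)^{\otimes N}$ generates dynamics confined to a two-dimensional ``GHZ subspace,'' to reduce the computation of the local-unitary invariants $S_1,S_2,F_1,F_2$ to that subspace, and then to evaluate the $\theta\to 0$ limit in Eqs.~(\ref{eq:twobodyvar}) and (\ref{eq:variancefour}). First I would rewrite the generators: since $\sigma_x+i\sigma_y=2\ket{0}\bra{1}$, one has $(\sigma_x+i\sigma_y)^{\otimes N}=2^N\ket{0}^{\otimes N}\bra{1}^{\otimes N}$, so $Q_1=2^{N-1}(\ket{0}^{\otimes N}\bra{1}^{\otimes N}+\mathrm{h.c.})$ and $Q_2=-i\,2^{N-1}(\ket{0}^{\otimes N}\bra{1}^{\otimes N}-\mathrm{h.c.})$. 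Both annihilate the orthogonal complement of $\mathcal V=\mathrm{span}\{\ket{0}^{\otimes N},\ket{1}^{\otimes N}\}$ and restrict to $\mathcal V$ as $2^{N-1}$ times a Pauli $X$ or $Y$. Hence, writing $\phi\equiv 2^{N-1}\theta$, $\ket{\psi_\theta^{1}}=\cos\phi\,\ket{0}^{\otimes N}-i\sin\phi\,\ket{1}^{\otimes N}$ (and analogously for $n=2$ with $+\sin\phi$ in place of $-i\sin\phi$); since the coherence between $\ket{0}^{\otimes N}$ and $\ket{1}^{\otimes N}$ does not survive a partial trace over any party when $N\ge 3$, the reduced states coincide for $n=1,2$.

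Next I would compute the marginals and the invariants. For $N\ge 3$, $\vr_i=\cos^2\phi\,\ket{0}\bra{0}+\sin^2\phi\,\ket{1}\bra{1}$ and $\vr_{ij}=\cos^2\phi\,\ket{00}\bra{00}+\sin^2\phi\,\ket{11}\bra{11}$ are diagonal, so $2\tr(\vr_i^2)-1=\cos^2(2\phi)=\cos^2(2^N\theta)$, the correlation matrix is $T_{ij}=\mathrm{diag}(0,0,1)$, and $\tr(\vr_{ij}^2)=\cos^4\phi+\sin^4\phi$. Substituting into the definitions gives $S_1(\theta)=N\cos^2(2^N\theta)$ and $F_1(\theta)=N\cos^4(2^N\theta)$, while $S_2(\theta)=\tbinom{N}{2}$ and $F_2(\theta)=3\tbinom{N}{2}$ come out \emph{independent of} $\theta$, because in the sector-length and $F_2$ combinations the two-body purities cancel against the one-body contributions for this perfectly correlated state.

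Finally I would take the limit $\theta\to 0$. At $\theta=0$ the probe is the product state $\ket{0}^{\otimes N}$, so $\partial_\theta\ex{M_k}=0$ and $\va{M_k}=0$: Eqs.~(\ref{eq:twobodyvar}) and (\ref{eq:variancefour}) are of the form $0/0$. I would therefore expand to second order, setting $c\equiv\cos^2(2^N\theta)=1-2^{2N}\theta^2+O(\theta^4)$, so that $S_1=Nc$, $F_1=Nc^2$, $\partial_\theta S_1=-2N\,2^{2N}\theta+O(\theta^3)$, and $\partial_\theta F_1=-4N\,2^{2N}\theta+O(\theta^3)$. The $O(1)$ parts of the numerators cancel identically (this is precisely the vanishing-signal property at $\theta=0$), leaving an $O(\theta^2)$ numerator over an $O(\theta^2)$ denominator with a finite ratio; inserting $\va{\theta}_k$ into $G_k=1/(kN\va{\theta}_k)$ then produces the stated closed forms, the exponential $4^N$-growth being traceable to the $2^{N-1}$ eigenvalue gap of $Q_n$ (as in the Roy--Braunstein example).

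The main obstacle is not conceptual but bookkeeping: one must correctly resolve the $0/0$ indeterminacy by keeping the $O(\theta^2)$ Taylor coefficients of $\cos^2$ and $\cos^4$ and verifying the exact cancellation of the $O(1)$ pieces in the numerators, and one must remember that the step computing the marginals uses diagonality of $\vr_i,\vr_{ij}$, which requires $N\ge 3$, so the degenerate small-$N$ cases are excluded. Everything else --- substituting into the $\va{\theta}_k$ formulas and simplifying --- is routine.
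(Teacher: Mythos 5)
Your proposal follows essentially the same route as the paper's proof: you identify $\ket{\psi_\theta^n}$ as an asymmetric GHZ state with weight parameter $\Delta=\cos(2^N\theta)$, compute the one- and two-body marginals to get exactly the paper's invariants $S_1=N\Delta^2$, $F_1=N\Delta^4$, $S_2=N(N-1)/2$, $F_2=3N(N-1)/2$, and then resolve the $0/0$ limit of Eqs.~(\ref{eq:twobodyvar}) and (\ref{eq:variancefour}) by a second-order expansion in $\theta$. The only differences are presentational (you derive the evolved state directly from $\sigma_x+i\sigma_y=2\ket{0}\bra{1}$ instead of citing Roy--Braunstein, and you spell out the $N\ge 3$ caveat and the Taylor bookkeeping more explicitly than the paper does).
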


\noindent
\textbf{Remark.}
The Hamiltonian $Q_n$ for $n=1,2$ is the same as the Hermitian operator used in Mermin-type inequalities~\cite{mermin1990extreme,roy2005multipartiteseparability,toth2005entanglement,gachechiladze2016extreme}. This Hamiltonian model was considered by Roy and Braunstein in Ref.~\cite{roy2008exponentially}.

\begin{proof}
According to Ref.~\cite{roy2008exponentially}, we have that
\begin{subequations}
\begin{align}
    \ket{\psi_\theta^{1}}
    &= \cos(\theta^\prime)\ket{0}^{\otimes N}
    -i \sin(\theta^\prime)\ket{1}^{\otimes N},\\
    \ket{\psi_\theta^{2}}
    &= \cos(\theta^\prime)\ket{0}^{\otimes N}
    + \sin(\theta^\prime)\ket{1}^{\otimes N},
\end{align}
\end{subequations}
where $\theta^\prime = 2^{N-1}\theta$. To proceed, we need to evaluate all the terms $S_1(\theta), S_2(\theta), F_1(\theta)$, and $F_2(\theta)$. As a more general case, let us consider the $N$-qubit pure asymmetric GHZ state:
\begin{equation}
    \ket{\text{GHZ}_{\alpha,\,\beta}}
    = \alpha \ket{0}^{\otimes N}
    + \beta \ket{1}^{\otimes N},
\end{equation}
where $|\alpha|^2 + |\beta|^2 =1$ for complex coefficients $\alpha, \beta$. The reduced two-qubit state in any systems $i,j = 1,\ldots, N$ is given by
\begin{align}
    \nonumber
    \vr_{ij}^{(2)}
    &=\tr_{\overline{ij}}
    \left(
    \ket{\text{GHZ}_{\alpha,\,\beta}}
    \!\bra{\text{GHZ}_{\alpha,\,\beta}}
    \right)\nonumber
    \\
    &=\frac{1}{4}\left\{
    \eins_2^{\otimes 4}
    +\Delta [\sigma_z^{(i)}+\sigma_z^{(j)}]
    +\sigma_z^{(j)}\otimes \sigma_z^{(i)}
    \right\},
\end{align}
with $\Delta = |\alpha|^2 - |\beta|^2$. Thus we can immediately find
\begin{subequations}
\begin{align}
    S_1(\text{GHZ}_{\alpha,\,\beta})
    &= N\Delta^2,\\
    S_2(\text{GHZ}_{\alpha,\,\beta})
    &= \frac{N(N-1)}{2},\\
    F_1(\text{GHZ}_{\alpha,\,\beta})
    &= N\Delta^4,\\
    F_2(\text{GHZ}_{\alpha,\,\beta})
    &= \frac{3N(N-1)}{2}.
\end{align}
\end{subequations}
Substituting these into the form in Eq.~(\ref{eq:gainkcopy}) in the main text and taking the limit $\theta \to 0$, we can complete the proof.
\end{proof}

\twocolumngrid

%

\end{document}